\documentclass[lettersize,journal]{IEEEtran}

\usepackage{booktabs}
\usepackage[ruled]{algorithm2e}
\SetAlFnt{\small}
\SetAlCapFnt{\small}
\SetAlCapNameFnt{\small}
\SetAlCapHSkip{0pt}
\IncMargin{-\parindent}

\usepackage{amsmath,amssymb,amsfonts}
\usepackage{comment}
\allowdisplaybreaks

\usepackage{latexsym}
\usepackage{siunitx}

\usepackage{bm}
\usepackage{nicefrac}
\usepackage{booktabs}
\usepackage{array}
\usepackage{multirow}
\usepackage[para,online,flushleft]{threeparttable}
\usepackage{makecell}
\usepackage{stfloats}
\usepackage{graphicx}
\usepackage{color}
\usepackage{url}

\usepackage{tabularx}
\usepackage{multirow}
\usepackage{makecell}
\usepackage{siunitx}
\usepackage{subfigure}

\newtheorem{theorem}{Theorem}[section]
\newtheorem{lemma}[theorem]{Lemma}
\newtheorem{definition}[theorem]{Definition}
\def\proof{{\bf Proof.}\hskip 0.3truecm}
\def\endproof{\quad $\Box$}

\allowdisplaybreaks[3]

\usepackage[caption=false,font=normalsize,labelfont=sf,textfont=sf]{subfig}
\usepackage{textcomp}
\usepackage{url}
\usepackage{verbatim}
\usepackage{cite}
\DontPrintSemicolon
\SetKw{KwAnd}{and}
\SetFuncSty{textsc}
\SetKwInOut{Input}{Input\ \ \ \ }
\SetKwInOut{Output}{Output}

\newenvironment{envblue}{}{}

\newcommand{\smblue}[1]{{#1}}
\newcommand\ppi{\boldsymbol{\pi}}
\newcommand\ppsi{\boldsymbol{\mathit{\psi}}}

\newcommand\ee{\boldsymbol{\mathit{e}}}

\newcommand\qq{\boldsymbol{\mathit{q}}}

\newcommand\xx{\boldsymbol{\mathit{x}}}
\newcommand\yy{\boldsymbol{\mathit{y}}}
\newcommand\zz{\boldsymbol{\mathit{z}}}

\renewcommand\AA{\boldsymbol{\mathit{A}}}
\newcommand\BB{\boldsymbol{\mathit{B}}}
\newcommand\DD{\boldsymbol{\mathit{D}}}

\newcommand\II{\boldsymbol{\mathit{I}}}
\newcommand\JJ{\boldsymbol{\mathit{J}}}
\newcommand\LL{\boldsymbol{\mathit{L}}}
\newcommand\calLL{\boldsymbol{\mathcal{L}}}

\newcommand\PP{\boldsymbol{\mathit{P}}}
\newcommand\QQ{\boldsymbol{\mathit{Q}}}

\renewcommand\SS{\boldsymbol{\mathit{S}}}
\newcommand\WW{\boldsymbol{\mathit{W}}}
\newcommand\ZZ{\boldsymbol{\mathit{Z}}}

\newcommand\Otil{\widetilde{O}}
\newcommand\ZZtil{\boldsymbol{\mathit{\tilde{Z}}}}

\newcommand\LLbar{\boldsymbol{\mathit{\bar{L}}}}

\newcommand{\var}[1]{\textbf{var} \left\{ #1 \right\} }

\newcommand{\E}{\mathcal{E}}
\newcommand{\V}{\mathcal{V}}

\newcommand{\N}{{\mathcal N}}

\newcommand{\G}{\mathcal{G}}

\newcommand{\kh}[1]{\left(#1\right)}
\def\abs#1{\left|#1  \right|}
\def\norm#1{\left\| #1 \right\|}
\def\calG{\mathcal{G}}
\def\G{\mathcal{G}}
\def\eps{\epsilon}
\newcommand{\one}{\mathbf{1}}

\def\defeq{\stackrel{\mathrm{def}}{=}}

\newcommand{\mat}[1]{\boldsymbol{#1}}
\renewcommand{\vec}[1]{\boldsymbol{#1}}

\newcommand{\mD}{\ensuremath{\mat{D}}\xspace}

\newcommand{\mI}{\ensuremath{\mat{I}}\xspace}

\newcommand{\mP}{\ensuremath{\mat{P}}\xspace}

\newcommand{\mS}{\ensuremath{\mat{S}}\xspace}

\newcommand{\ve}{\ensuremath{\vec{e}}\xspace}

\newcommand{\vp}{\ensuremath{\vec{p}}\xspace}

\newcommand{\vw}{\ensuremath{\vec{w}}\xspace}

\newcommand{\bbE}{\ensuremath{{\mathbb E}}\xspace}

\newcommand{\bbP}{\ensuremath{{\mathbb P}}\xspace}

\newcommand{\cV}{\ensuremath{{\mathcal V}}\xspace}

\newcommand{\cX}{\ensuremath{{\mathcal X}}\xspace}

\begin{document}
\title{Behavior and Sublinear Algorithm for Opinion Disagreement on Noisy Social Networks}
\author{Wanyue~Xu, Yubo~Sun, Mingzhe~Zhu, Zuobai~Zhang, and  Zhongzhi~Zhang~\IEEEmembership{Member,~IEEE}

    \thanks{Wanyue Xu, Yubo Sun, Mingzhe Zhu, Zuobai Zhang, and Zhongzhi Zhang are with the College of Computer Science and Artificial Intelligence, Fudan University, Shanghai 200433, China. Wanyue Xu is also with Shanghai Key Laboratory of Brain-Machine Intelligence for Information Behavior, Shanghai 201620, China; and the School of Business and Management, Shanghai International Studies University, Shanghai 201620, China.
  
        xuwy@fudan.edu.cn; 25110890019@m.fudan.edu.cn; 21210240107@m.fudan.edu.cn; 17300240035@fudan.edu.cn; zhangzz@fudan.edu.cn.}

}

\markboth{}%
{Xu \MakeLowercase{\textit{et al.}}: Behavior and Sublinear Algorithm for Opinion Disagreement on Noisy Social Networks}


\maketitle

\begin{abstract}
The phenomenon of opinion disagreement has been empirically observed and reported in the literature, which is affected by various factors, such as the structure of social networks. An important discovery in network science is that most real-life networks, including social networks, are scale-free and sparse. In this paper, we study noisy opinion dynamics in sparse scale-free social networks to uncover the influence of power-law topology on opinion disagreement. We adopt the popular discrete-time DeGroot model for opinion dynamics in a graph, where nodes' opinions are subject to white noise.  We first study opinion disagreement in many realistic and model networks with a scale-free topology, which approaches a constant, indicating that a scale-free structure is resistant to noise in the opinion dynamics. Moreover, existing algorithms for estimating opinion disagreement are computationally impractical for large-scale networks due to their high computational complexity. To solve this challenge, we introduce a sublinear-time algorithm to approximate this quantity with a theoretically guaranteed error. This algorithm efficiently simulates truncated random walks starting from a subset of nodes while preserving accurate estimation. Extensive experiments demonstrate its efficiency,  accuracy, and scalability.
\end{abstract}

\begin{IEEEkeywords}
    Opinion dynamics, fast algorithm, graph mining, social network, Monte Carlo algorithm, sampling algorithm
\end{IEEEkeywords}

\section{Introduction}
\IEEEPARstart{W}
ith the rapid development of Internet technology in recent years, social media and online social networks have experienced explosive growth~\cite{Le20}, which constitute an indispensable part of people's lives~\cite{SmCh08}. Due to the absence of geographical barriers erased by the Internet, social media and on-line social networks have created many new ways for people across the globe to exchange in real time their opinions or information about some topics or issues, resulting in a substantial change of ways people share and form opinions~\cite{AnYe19}. At the same time, the huge prevalence of online social media platforms also leads to various complex social phenomena, such as polarization~\cite{MaTeTs17,MuMuTs18,ZhBaZh21,HaMeRiUp21}, disagreement~\cite{GaKlTa20,WaZhZh25,SuSuZhZh25,QiQiZhLi26}, and diversity~\cite{MaPa19}, all of which have become a hot topic of study in the literature of social science~\cite{AxDaFo21}, control science~\cite{YiPa20}, and computer science~\cite{XuBaZh21,ZhBaZh21,SaKeKhLa23}.

For better understanding the diffusion, evolution, and formation of opinions, as well as the social phenomena created by online social media, a rich variety of models for opinion dynamics have been developed~\cite{No20}. These models have been applied to various practical scenarios~\cite{BeWaVaHoShAl21,FrPrTePa16,FrBu17}, such as accounting for the Paris Agreement negotiation process~\cite{BeWaVaHoShAl21} and  the change of public opinion on the US-Iraq War~\cite{FrBu17}.
Among diverse opinion dynamics models, the DeGroot model~\cite{De74} is a popular one. 
In this model, each node has a real scalar-valued initial opinion. At each time step, every node updates its opinion by averaging the opinions of all its neighbors at the previous step. Finally, the opinions of all nodes reach consensus. Since its establishment, the DeGroot model has been modified or generalized, to incorporate various factors contributing to opinion dynamics~\cite{No20}, such as noise~\cite{XiBoKi07,JaOl19,XuZh23CIKM}. In opinion dynamics, there often exist stochastic changes for exogenous factors, such as uncertainty of communication environment and linguistic opinion, which can be represented as noise. In~\cite{XiBoKi07} a significant extension of the DeGroot model was proposed by including noise. For this noisy model, the opinions of nodes will not reach agreement but fluctuate around a weighted average of their opinions in long time limit, leading to opinion disagreement/diversity/discord, which is quantified by a weighted squared deviation of each node's opinion from their average.

The opinion disagreement in the noisy DeGroot model exhibits rich behavior dependent on the topology of interaction systems~\cite{JaOl19}. For example, opinion disagreement behaves as a different function of the node number in the path graph, the sparse regular graphs, and the complete binary tree. However, these graphs cannot well mimic realistic networks. It has been established~\cite{BaAl99} that most real-world networks including social networks are sparse having a small constant average degree, and display the scale-free topology, with their node degrees $k$ following a power-law distribution $P(k)\sim k^{-\gamma}$ with $2<\gamma\leqslant 3$, which has attracted much interest from the community of graph mining~\cite{WeHeXiWaLiDuWe19,XuShZhKaZh20,MaJa21,XuZh23}. The scale-free structure has a strong influence on various dynamical processes taking place in power-law graphs~\cite{Ne03}. However, the effect of scale-free topology on opinion disagreement for the noisy DeGroot model is still not well understood. Moreover, direct exact computation of opinion disagreement needs cubic time, which is time-consuming for large networks. These motivate us to explore the effect of scale-free structure on opinion disagreement and to develop efficient algorithms to solve the challenge for computing opinion disagreement.

\begin{envblue}
    To overcome the challenge of directly computing opinion disagreement, we develop two novel approximation algorithms with theoretical guarantees. The first algorithm achieves nearly linear time complexity by combining spectral sparsification techniques with Laplacian solvers and Johnson-Lindenstrauss lemma, carefully addressing self-loops in the graph transformation. The second algorithm takes a different approach through sublinear-time random walk samplings. It estimates returning probabilities of nodes via truncated walks from a carefully selected subset of nodes, avoiding matrix operations entirely. Numerical results validate its scalability and provable error bounds for networks with tens of millions of nodes.
\end{envblue}

Our main contributions are summarized as follows.
    
(i)    We derive an analytical expression for the opinion disagreement on an arbitrary connected weighted non-bipartite graph $\G$, in terms of the eigenvalues and eigenvectors of the normalized adjacency matrix associated with $\G$, which depend on the structural properties of $\G$.
 
(ii)  Based on the obtained expression, we examine the effect of scale-free topology on opinion disagreement by studying the noisy DeGroot model on various real and model scale-free networks, which tends to small constants, independent of the network size, indicating that scale-free structure is resistant to noise in the DeGroot model for opinion dynamics.

(iii)   We develop two approximation algorithms to estimate opinion disagreement, which have theoretically guaranteed errors. The first one has nearly linear time complexity with respect to the number of edges in $\G$, while the second one is sublinear with respect to the node number. In contrast to direct exact computation of opinion disagreement, our approximation algorithms significantly reduces the time cost.

(iv) We perform extensive experiments on various real-world networks and model networks, which demonstrate that our proposed approximation algorithms are both efficient and effective, with the second algorithm being  scalable to large networks with over fifty millions of nodes.

\section{Related Work}

This section is devoted to brief review of some previous work related to ours.

Among different models for opinion dynamics, the DeGroot model~\cite{De74} is a popular one. In this model, every individual/agent has only one scalar-valued opinion, which evolves as an average of its neighbors' opinions. A necessary and sufficient condition for reaching consensus based on the DeGroot model was obtained in~\cite{Be81}. Although simple and succinct, the DeGroot model captures the role of interactions among individuals in the opinion evolution. However, the factors contributing to the propagation, evolution, and formation for opinions are diverse, most of which are overlooked in the original DeGroot model. In order to better model opinion dynamics, in the past years, numerous modifications or extensions of the DeGroot model have been proposed by incorporating different aspects affecting the behavior or distribution of opinions~\cite{No20}. 

An important extension of the DeGroot model is the Friedkin-Johnsen (FJ) model~\cite{FrJo90}. It introduces a resistance parameter for each agent to the DeGroot model, reflecting the agent's willingness to conform with its neighbors' opinions. In the FJ model, each agent possesses two opinions, innate opinion and expressed opinion, with the long-run expressed opinions converging but never reaching agreement. The FJ model has been applied to analyze various social phenomena~\cite{ChMu20,TuNe22}. Another major variant of the DeGroot model is the Altafini model~\cite{Al13}, where the interactions between agents can be either cooperative or antagonistic~\cite{ZhSuXuLiZh24}. In~\cite{ZhWuAl20}, a modification of the DeGroot model is studied, where an agent updates its opinion by taking into account both its neighbors' opinions and the opinions of its neighbors' neighbors~\cite{ZhXuZhCh24}. Moreover, existing work also extends the DeGroot model by incorporating leaders, whose opinions remain unchanged. These DeGroot models with leaders are formulated in the framework of optimization, such as maximizing the diversity~\cite{MaPa19} and the total opinion~\cite{MeAsDaAmAn13,VaFaFr14,MaAb19,ZhZh21,ZhZhLiZh23,ZhZh23}. Finally, other variants or modifications of the DeGroot model were also proposed and studied~\cite{DaGoLe13,DoDiMa17}.


The above-mentioned models provide insights into understanding opinion dynamics, since they capture some important aspects contributing opinion formation, including agent's attributes, interactions among agents, and opinion updating rules. However, in the majority of currently existing models for opinion dynamics based on the DeGroot model, opinion update rules are deterministic. In practical situations, individual opinions are not governed deterministically but are subject to stochastic change from a multitude of factors, for example, exogenous factors unrelated to social influence. In order to capture exogenous noise, a noisy DeGroot model was introduced and studied~\cite{XiBoKi07}, which was later extended to the FJ model~\cite{StLi20}. In the noisy model, the limiting opinions do not converge but fluctuate around their weighted average, leading to opinion disagreement $\delta$ quantified by the expected mean-squared deviation of opinions, the upper and lower bounds of which were studied in~\cite{LoGaZa13}.

\begin{envblue}
    While our work focuses on the noisy DeGroot model to isolate the impact of scale-free topology, several extensions of this framework merit discussion. Prior studies have proposed enhancements to opinion dynamics models that incorporate cognitive biases~\cite{So18,ChWaTs22}, heterogeneous susceptibility to influence~\cite{BeVaIe22,MeRaBrHaRo24,AbKlPaTs18,LiYeAnBaNe18}, and alternative formulations of uncertainty~\cite{Ch18,ZhLiKoDoYu19,FoKaKoSk18}. These factors better capture behavioral nuances in real social systems, which also deepen our understanding of structural robustness. Dynamic network paradigms, where topology evolves with time and opinions~\cite{HeFaZhWa23}, represent another critical frontier for modeling real-world social networks. Notably, signed networks where edges encode trust or distrust relationships~\cite{ArPaSa23,HeFeChLiHuTa22,ArPa24}, also offer a natrual extension for modeling opinion dynamics~\cite{HeSuWaWaHuYiWaMa21}. However, these sophisticated models often sacrifice analytical tractability for ecological validity on large-scale networks. Our choice to employ the homogeneous noisy DeGroot model creates a controlled baseline that rigorously quantifies purely topological effects in power-law networks. This foundational analysis enables future work to systematically introduce additional layers of complexity. Meanwhile, it also retains the ability to disentangle structural influences from behavioral factors.
\end{envblue}

The opinion disagreement $\delta$ and its leading behavior in some typical networks with various structural properties have been deeply studied in~\cite{JaOl19}, which shows that the scaling behavior for $\delta$ in different networks is rich. For a network with $N$ nodes, $\delta$ can behave linearly, logarithmically, or independently of $N$.
For example, in the path graph, the ring graph, and sparse regular graphs, $\delta$ scales linearly with $N$; in the complete binary tree and the 2-dimensional grid, $\delta$ grows logarithmically with $N$; while in the star graph, the complete graph, regular $\alpha$-expander graphs, regular dense graphs, dense Erd\"os-R\'enyi random graphs, and $d$-dimensional grids with $d\geqslant 3$, $\delta$ are constants, irrespective of $N$. Therefore, the behavior of opinion disagreement $\delta$ is dependent on network structure and average node degree. Thus far, the impact of scale-free structure on $\delta$ has not been explored, despite the ubiquity of scale-free networks in real-world social and natural systems~\cite{BaAl99,Ne03}.
While the computation~\cite{ChLiDe18}, property~\cite{GaKlTa20} and optimization~\cite{MuMuTs18} of $\delta$ have been investigated, existing literature predominantly focuses on the FJ model, whose formulation differs from the noisy DeGroot model. Furthermore, previous evaluations have been restricted to graphs with less than 200 thousand edges, leaving the scalability issue for large-scale networks largely unresolved.

\section{Preliminaries}

In this section, we give a brief introduction to some essential
concepts about a graph, its associated matrices, random walks and their relevant quantities on a graph.

\subsection{Graph and Matrix Notation}

Let $\G=(\V,\E,w)$ be a connected undirected weighted network with or without self-loops, where $\V=\{1,2,...,N\}$ denotes the set of $N$ nodes, $\E$ denotes the edge set consisting of $M$ unordered node pairs $\{i,j\}$, $i,j\in \V$, and $w:\E\rightarrow\mathbb{R}_{+}$ is the positive edge weight function, with $w(e)=w_e$ denoting the weight for edge $e$. The adjacency matrix $\AA$ of graph $\G$ is an $N\times N$ symmetric matrix, which encodes many topological and weighted properties of the graph, with the entry $a_{ij}$ at row $i$ and column $j$ defined as follows: $a_{ij}=w_e$ if nodes $i$ and $j$ are linked by an edge $e\in\E$, $a_{ij}=0$ otherwise. Let $\N_i$ denote the set of neighbors for node $i$, then the degree of $i$ is $d_i=\sum_{j\in\mathcal{N}_i}a_{ij}$. And the total degree $d_{\rm sum}$, the sum of degrees over all the $N$ nodes, is $d_{\rm sum}=\sum_{i=1}^{N} d_i$. The degree matrix $\DD$ of graph $\G$ is a diagonal matrix whose $i$th diagonal entry is $d_i$. And the Laplacian of $\G$ is defined as $\LL=\DD-\AA$, which is positive semi-definite satisfying $\LL\one=\mathbf{0}$, where $\one$ and $\mathbf{0}$ represent the $N$-dimensional vectors with all entries being ones and zeros, respectively.

The transition matrix $\PP$ of a graph $\G$ is $\PP=\DD^{-1}\AA$, which is row-stochastic.
Generally, $\PP$ is asymmetric with the exception of the case that $\G$ is a regular graph.
However, $\PP$ is similar to a real symmetric matrix $\SS$, called normalized adjacency matrix of $\G$ and defined by
$\SS = \DD^{-1/2}\AA\DD^{-1/2} = \DD^{1/2}\PP\DD^{-1/2}$.
Thus, $\PP$ and $\SS$ have identical set of eigenvalues. Let $\II$ stand for the identity matrix of dimension $N$.
Then, the normalized Laplacian matrix of graph $\G$~\cite{Ch97} is defined as $\calLL=\II-\SS$. 

\subsection{Random Walks on a Graph}

For a connected graph $\G$, we can define a discrete-time random walk running on it.
At each time step, the walker starting from its current location $i$ jumps to a neighboring node $j$ with probability $a_{ij}/d_i$. Such a stochastic process is described by a Markov chain~\cite{KeSn76}, characterized by the transition matrix $\PP$ defined above, with the $ij$-th entry $p_{ij}=a_{ij}/d_i$ representing the probability of moving from $i$ to $j$ in one time step. For a non-bipartite graph $\G$, the random walk has a unique stationary distribution $\ppi=(\pi_1, \pi_2, \cdots, \pi_N)^{\top}$, with $\pi_i=d_i/d_{\rm sum}$ satisfying $\sum_{i=1}^N \pi_i=1$ and $\ppi^\top\PP=\ppi^\top$. Note that for a transition matrix $\PP$, its square $\PP^2$ exhibits the same row-stochastic property and stationary distribution \(\ppi\) as \(\PP\):
\begin{envblue}
    \begin{equation}\label{eq:row-stochastic-sqtrans}
        \PP^2\one=\PP\one=\one,
    \end{equation}
    \begin{equation}\label{eq:stat-distr-sqtrans}
        \ppi^\top\PP^2=\ppi^\top\PP=\ppi^\top.
    \end{equation}
\end{envblue}
As will be shown later, based on graph $\G$ one can construct a new graph $\G'$ with transition matrix $\PP^2$. In this paper, we only consider non-bipartite graphs.
\begin{envblue}
    Table~\ref{tab:notation} lists the frequently used notations of different types of graphs related to \(\G\) throughout this paper.
\end{envblue}
\begin{table}[htbp]
    \centering
    \caption{Frequently used notations.}
    \label{tab:notation}
    \begin{tabularx}{\linewidth}{rX}
        \toprule
        \textbf{Notation} & \textbf{Description}                                    \\
        \midrule
        \(\G=(\V,\E,w)\)  & A connected undirected weighted network.                \\
        \(\PP\)           & The transition matrix of random walks on \(\G\).        \\
        \(\G'\)           & The weighted graph with transition matrix \(\PP^2\).    \\
        \(\tilde{\G}\)    & The sparsified graph of \(\G'\) by Lemma~\ref{eq:Ltil}. \\
         \(\bar{\G}\)    & The graph obtained from  \(\tilde{\G}\) by deleting loops. \\
        \bottomrule
    \end{tabularx}
\end{table}
A fundamental quantity related to random walks is hitting time, also called first-passage time~\cite{Lo93,CoBeTeVoKl07}.
For a random walk on graph $\G$ with transition matrix $\PP$, the hitting time from node $i$ to node $j$ is the expected number of jumps for a walker starting from $i$ to visit $j$ for the first time, denoted by $H_{ij}(\G)$ or $H_{ij}(\PP)$.
Hitting times have found wide applications in different areas~\cite{XuShZhKaZh20}. For example, the partial mean hitting time $H_j(\G)$~\cite{Be09,TeBeVo09} to a node $j$, can be used as a centrality measure of node $j$ in graph $\G$~\cite{MaMagi15,ZhXuZh20,XiXuZhZh25}. This quantity also plays an important role in noisy opinion dynamics~\cite{JaOl19}.
\begin{envblue}
    \begin{definition}
        The partial mean hitting time \(H_j(\G)\) is defined as the expected time for a walker arriving at node $j$, which starts from a node $i$ selected randomly according to the stationary distribution, that is,
        \begin{equation}\label{eq:def-hit-centr}
            H_j(\G)=\sum\nolimits_{i=1}^{N}\pi_i H_{ij}(\G).
        \end{equation}
    \end{definition}

\end{envblue}


\section{Discrete-Time Noisy DeGroot Model}

In this section, we introduce the noisy discrete-time DeGroot model for opinion dynamics. 

\subsection{Noisy DeGroot model}

We first introduce the popular discrete-time DeGroot model~\cite{De74} on a graph $\G$, where each node has a nonnegative scalar-value opinion. Let $\xx(t) \in \mathbb{R}^N$ denote the vector of node opinions at time $t$, with the entry $x_i(t)$ being the opinion of node $i$. At time $t+1$, node $i$ updates its opinion by averaging the opinions of all its neighbours at time $t$ as
{\small{$x_i(t+1) = \frac{\sum_{j \in \N_i}a_{ij}x_j(t)}{\sum_{j \in \N_i}a_{ij}}=\sum_{j \in \N_i}p_{ij}x_j(t)$}},
The evolution of node opinions can be represented in matrix form as $\xx(t+1)=\PP\xx(t)$.
After long-time evolution, the opinion vector $\xx(t)$ converges to $\big(\sum_{i=1}^N \pi_ix_i(0)\big)\one$, where $x_i(0)$ is the initial opinion of node $i$. Thus, all nodes reach a consensus value $\sum_{i=1}^N \pi_ix_i(0)$.

In~\cite{XiBoKi07}, the discrete-time DeGroot model is extended by incorporating noise. In this case, the opinion vector evolves according to the following rule $\xx(t+1) = \PP\xx(t) + \bm{\phi}(t)$,
where vector $\bm{\phi}(t)=(\phi_1(t), \phi_2(t), \cdots, \phi_N(t))^{\top}$ represents zero-mean independent identically distributed noise. In this paper, we focus on the impact of network topology on the behavior of opinions. We assume that the variance of the noise $\phi_i(t)$ at each node $i$ is unit, and that
noises at different nodes are uncorrelated. 

In the presence of noise, the opinion vector $\xx(t)$ does not converge, with the opinions of all nodes fluctuating around the weighted average of their current opinions when $t$ is large. These fluctuations can be characterized by the concept of opinion disagreement defined below.
\begin{definition}
    The opinion disagreement for discrete-time noisy DeGroot model for opinion dynamics is defined as the weighted average of the variance of the deviation from the weighted average of current opinions for nodes, with both averages being based on the stationary distribution $\ppi$:
    \begin{equation}\label{eq:delta0}
        \delta\defeq \lim_{t\to \infty}\sup \sum\nolimits_{i=1}^{N} \pi_i\,\var{x_i(t)-\sum\nolimits_{j=1}^{N}\pi_j x_j(t)}.
    \end{equation}
\end{definition}
It has been shown~\cite{JaOl19} that the opinion disagreement depends on the hitting times of random walks associated with transition matrix $\PP^2$ as
\begin{equation}\label{eq:delta}
\delta=\sum\nolimits_{i=1}^{N}\pi_i^2\sum\nolimits_{j=1}^{N}\pi_j H_{ji}(\PP^2).
\end{equation}
Equation~\eqref{eq:delta} explicitly reveals the relationship between disagreement and the hitting time: a larger hitting time makes local noise harder to average out, thereby amplifying the steady-state opinion variance.

\subsection{Spectral Expression for Opinion Disagreement}\label{sec:formula}

In fact, the transition matrix $\PP^2$ is associated with a connected weighted graph $\G'$ with loops, which can be constructed from $\G$ as follows~\cite{ChChLiPeTe15b}. The node set of $\G'$ is $\V=\{1,2,...,N\}$, identical to that of $\G$. For the adjacency matrix of $\G'$, its $ij$th entry can be obtained by performing a two-step random walk on $\G$. Concretely, two nodes $i$ and $j$ in $\G'$ are directly linked by an edge $e'$ if and only if there is a $2$-length path between $i$ and $j$ in $\G$. And the weight of edge $e'$ is equal to the product of the degree $d_i$ for node $i$ in $\G$ and the probability that a walker starts from $i$ and ends at $j$ after performing two-step random walks in $\G$.  It is easy to verify that the degree of each node $i$ in $\G'$ is $d_i$, identical to that in $\G$:
\begin{envblue}
    \begin{align*}
\DD\PP^2\one=\DD\one=\kh{d_1,d_2,\dots,d_n}^\top,
    \end{align*}
    where the first equation is due to~\eqref{eq:row-stochastic-sqtrans}.
\end{envblue}
Thus, the degree matrix of graph $\G'$ is also $\DD$. Then, for graph $\G'$, the normalized Laplacian matrix $\calLL(\G')$ is
$\calLL(\G')=\II-\SS^2
    =\DD^{1/2}(\II-\PP^2)\DD^{-1/2}$,
and the Laplacian matrix $\LL(\G')$ is
\begin{equation}\label{Lap}
    \LL(\G')=\DD^{1/2}\calLL(\G')\DD^{1/2}=\DD(\II-\PP^2).
\end{equation}
\smblue{Combining~\eqref{eq:delta} with~\eqref{eq:def-hit-centr}}, opinion disagreement $\delta$ can be rewritten as
\begin{equation}\label{eq:delta02x}
    \delta=\sum \nolimits_{i=1}^{N}\pi_i^2H_{i}(\G').
\end{equation}

Let $\lambda_1,\lambda_2,\ldots,\lambda_N$ denote the $N$ eigenvalues of matrix $\SS$, and let $\ppsi_1,\ppsi_2,\ldots,\ppsi_N$ be their corresponding mutually orthogonal unit eigenvectors.
Since all $\lambda_i$ are real, one can rearrange them in a decreasing order as $1=\lambda_1>\lambda_2\ge\lambda_3\ge\cdots\ge\lambda_N\ge-1$ and let $\lambda = \max\{\abs{\lambda_2}, \abs{\lambda_N}\}$. Thus, for matrix $\SS^2$ that is similar to its transition matrix $\PP^2$, its eigenvalues are $\lambda_i^2$ with corresponding eigenvectors $\ppsi_i$, $i=1,2,\ldots,N$. Then, the pseudoinverse or Moore-Penrose generalized inverse~\cite{BeGrTh74} of $\calLL(\G')$, which we denote $\calLL^\dagger(\G')$, can be written as
$\calLL^\dagger(\G')= \sum_{k=2}^{N}\frac{1}{1-\lambda_i^2} \ppsi_k\ppsi_k^\top$. Thus, the $i$th diagonal entry $\calLL_{ii}^\dagger(\G')$ of $\calLL^\dagger(\G')$ is $\sum_{k=2}^{N}\frac{1}{1-\lambda_k^2}\psi^2_{ki}$.


Using a way similar to that in~\cite{Lo93}, the quantities $H_{ij}(\PP^2)$ and $H_{j}(\PP^2)$ can be derived and expressed as
    {\small$H_{ij}(\PP^2) = d_{\rm sum}\,\sum_{k=2}^{N}\frac{1}{1-\lambda_k^2}\kh{\frac{\psi_{kj}^2}{d_j}-\frac{\psi_{ki}\psi_{kj}}{\sqrt{d_id_j}}}$}
and
    {\small$H_{i}(\G') = \frac{1}{\pi_i} \sum_{k=2}^{N} \frac{1}{1-\lambda_k^2}\psi_{ki}^2= \frac{1}{\pi_i}\calLL_{ii}^\dagger(\G')$}, respectively.  %
Therefore, the opinion disagreement $\delta$ can be represented as
\begin{equation}\label{eq:delta02}
    \delta=\sum_{i=1}^{N}\pi_i \sum_{k=2}^{N} \frac{1}{1-\lambda_k^2}\psi_{ki}^2=\sum_{i=1}^{N}\pi_i \calLL_{ii}^\dagger(\G')\,.
\end{equation}
In order to better understand the formulation in~\eqref{eq:delta02}, we compute relevant quantities in~\eqref{eq:delta02} for the five-node line graph and report the result in Table~\ref{tab:nodedisagree}. From Table~\ref{tab:nodedisagree}, we can see that for those relevant quantities, different nodes might have distinct values, as captured by intuition.  

   \begin{table}[htbp!]
        \renewcommand{\arraystretch}{1.5}
        \centering 
    \caption{Quantities in~\eqref{eq:delta02} for the five-node line graph. }
    \label{tab:nodedisagree}
 \begin{tabular}{cccccc}
		\hline
           node $i$ &$1$ &$2$ &$3$ &$4$ &$5$  \\ 
            \hline
            $\pi_i$ &$0.125$ &$0.25$ &$0.25$ &$0.25$ &$0.125$ \\ 
            $\mathcal{L}^{\dagger}_{ii}$ &$1.25$ &$1$ &$0.5$ &$1$ &$1.25$ \\
            $\pi_i\mathcal{L}^{\dagger}_{ii}$ &$0.15625$ &$0.25$ &$0.125$ &$0.25$ &$0.15625$ \\ \hline
	\end{tabular}
\end{table}

Before closing this section, it should be pointed out that although the sum of the $N$ terms on the right-hand sides of~\eqref{eq:delta0} and~\eqref{eq:delta02} are both equal to $\delta$, the corresponding $N$ terms in the two equations are generally not equal to each other. That is, $\calLL_{ii}^\dagger(\G')$ is often not equivalent to $\var{x_i(t)-\sum_{j=1}^{N}\pi_j x_j(t)}$ for $t\to \infty$. However,~\eqref{eq:delta02} correlates opinion disagreement $\delta$ with and the spectra of normalized adjacency matrix, enhancing our understanding of network structure's influence on $\delta$. This expression for $\delta$ provides new analytical tools and insights into algorithms for computing opinion disagreement.

\section{Opinion Disagreement in Networks}

In this section, we investigate opinion disagreement of the noisy discrete-time DeGroot model on some real and model power-law networks by using~\eqref{eq:delta02}.
\begin{envblue}
To demonstrate the effect of scale-free topology on opinion disagreement, we also study the opinion disagreement on a family model networks without scale-free structures.
\end{envblue}

\begin{table*}
    \centering
    \caption{Statistics and mean relative error of \textsc{SimulateMC} (SimMC), \textsc{ApproxDelta} and \textsc{SampleDelta} with various \(\epsilon\) on some real-world networks. We indicate both node number $N$ and edge number $M$ of LCC, power-law exponent $\gamma$, opinion disagreement $\delta$ computed by the exact algorithm (Exact) through~\eqref{eq:delta02}.}
    \label{tab:real_disagree}
    \begin{tabular}{@{}crrccccccccc@{}}
        \toprule
        \multirow{3}{*}{Network}
                               &
        \multirow{3}{*}{$N$}
                               &
        \multirow{3}{*}{$M$}
                               &
        \multirow{3}{*}{$\gamma$}
                               &
        \multirow{3}{*}{$\delta$}
                               &
        \multicolumn{7}{c}{Mean relative error relative to $\delta$}                                                                                                                \\
        \cmidrule(l){6-12}
                               &        &         &       &       &
        \multirow{2}{*}{SimMC}
                               &
        \multicolumn{3}{c}{\textsc{ApproxDelta}}
                               &

        \multicolumn{3}{c}{\textsc{SampleDelta}}                                                                                                                                    \\
        \cmidrule(l){7-12}
                               &        &         &       &       &       & $\epsilon=0.35$ & $\epsilon=0.3$ & $\epsilon=0.25$ & $\epsilon=0.35$ & $\epsilon=0.3$ & $\epsilon=0.25$ \\
        \midrule
        Zachary                & 34     & 78      & 2.161 & 1.287 & 0.086 & 0.027           & 0.015          & 0.009           & 0.013           & 0.005          & 0.003           \\
        bio-celegans           & 453    & 2,025   & 2.621 & 1.132 & 0.047 & 0.059           & 0.045          & 0.034           & 0.022           & 0.021          & 0.020           \\
        web-polblogs           & 643    & 2,280   & 2.021 & 1.300 & 0.019 & 0.053           & 0.031          & 0.031           & 0.007           & 0.005          & 0.005           \\
        Protein                & 1,458  & 1,948   & 2.879 & 3.081 & 0.089 & 0.017           & 0.008          & 0.006           & 0.013           & 0.009          & 0.008           \\
        soc-hamsterster        & 2,000  & 16,097  & 2.421 & 1.140 & 0.083 & 0.006           & 0.005          & 0.002           & 0.003           & 0.002          & 0.002           \\
        socfb-Amherst41        & 2,235  & 90,954  & 2.361 & 1.013 & 0.032 & 0.004           & 0.002          & 0.002           & 0.001           & 0.001          & 0.001           \\
        Human proteins (Vidal) & 2,783  & 6,007   & 2.132 & 1.515 & 0.037 & 0.009           & 0.007          & 0.005           & 0.005           & 0.005          & 0.002           \\
        bio-grid-worm          & 3,343  & 6,437   & 2.161 & 1.681 & 0.066 & 0.056           & 0.029          & 0.027           & 0.008           & 0.006          & 0.003           \\
        socfb-Mich67           & 3,745  & 81,901  & 2.521 & 1.025 & 0.050 & 0.003           & 0.002          & 0.002           & 0.001           & 0.001          & 0.001           \\
        soc-PagesTVshow        & 3,892  & 17,239  & 2.74  & 1.766 & 0.044 & 0.007           & 0.002          & 0.002           & 0.004           & 0.003          & 0.002           \\
        web-EPA                & 4,253  & 8,897   & 2.461 & 1.791 & 0.072 & 0.031           & 0.017          & 0.007           & 0.007           & 0.005          & 0.004           \\
        web-spam               & 4,767  & 37,375  & 2.381 & 1.126 & 0.002 & 0.009           & 0.006          & 0.003           & 0.002           & 0.001          & 0.001           \\
        socfb-American75       & 6,370  & 217,654 & 2.441 & 1.016 & 0.096 & 0.006           & 0.005          & 0.002           & 0.001           & 0.001          & 0.001           \\
        socfb-MIT              & 6,402  & 251,230 & 2.221 & 1.013 & 0.030 & 0.007           & 0.005          & 0.003           & 0.001           & 0.001          & 0.001           \\
        Routeviews             & 6,474  & 13,895  & 2.462 & 1.617 & 0.062 & 0.092           & 0.058          & 0.049           & 0.005           & 0.005          & 0.003           \\
        socfb-CMU              & 6,621  & 249,959 & 2.301 & 1.014 & 0.038 & 0.007           & 0.006          & 0.004           & 0.001           & 0.001          & 0.001           \\
        soc-PagesGovernment    & 7,057  & 89,429  & 2.751 & 1.082 & 0.057 & 0.008           & 0.007          & 0.005           & 0.002           & 0.001          & 0.001           \\
        socfb-Duke14           & 9,885  & 506,437 & 2.061 & 1.010 & 0.075 & 0.007           & 0.006          & 0.005           & 0.001           & 0.001          & 0.001           \\
        socfb-Bingham82        & 10,001 & 362,892 & 2.421 & 1.015 & --    & 0.005           & 0.004          & 0.003           & 0.001           & 0.001          & 0.001           \\
        ca-HepPh               & 11,204 & 117,619 & 2.081 & 1.177 & --    & 0.018           & 0.014          & 0.008           & 0.002           & 0.002          & 0.001           \\
        socfb-Stanford3        & 11,586 & 568,309 & 2.001 & 1.011 & --    & 0.006           & 0.006          & 0.005           & 0.001           & 0.000          & 0.000           \\
        socfb-Baylor93         & 12,799 & 679,815 & 2.061 & 1.010 & --    & 0.008           & 0.008          & 0.007           & 0.001           & 0.001          & 0.000           \\
        ca-AstroPh             & 17,903 & 196,972 & 2.861 & 1.096 & --    & 0.011           & 0.007          & 0.004           & 0.002           & 0.001          & 0.001           \\
        soc-Gplus              & 23,613 & 39,182  & 2.621 & 3.654 & --    & 0.231           & 0.132          & 0.050           & 0.009           & 0.006          & 0.004           \\
        CAIDA                  & 26,475 & 53,381  & 2.509 & 1.607 & --    & 0.077           & 0.076          & 0.044           & 0.007           & 0.003          & 0.002           \\
        soc-PagesArtist        & 50,515 & 819,090 & 2.261 & 1.038 & --    & 0.010           & 0.006          & 0.006           & 0.001           & 0.000          & 0.000           \\
        Brightkite             & 56,739 & 212,945 & 2.481 & 1.269 & --    & 0.020           & 0.014          & 0.009           & 0.004           & 0.003          & 0.002           \\
        \bottomrule
    \end{tabular}
\end{table*}

\subsection{Opinion Disagreement in Power-Law Real-World Networks}
We first examine opinion disagreement for real-world networks. For this purpose,
we choose a large collection of real networks, all of which are scale-free having a power-law degree distribution $P(k)\sim k^{-\gamma}$ with $2<\gamma\leqslant 3$. All studied networks are selected from two public datasets: Koblenz Network Collection~\cite{Ku13} and the Network Repository~\cite{RoAh15}. Due to its cubic-time computation complexity,~\eqref{eq:delta02} cannot handle large-scale networks. The sizes of chosen networks are relatively small, with the largest one containing around $6\times 10^4$ nodes. The statistics of these networks are listed in Table~\ref{tab:real_disagree}. 
For each network that is disconnected originally, we compute the opinion disagreement for its largest connected components (LCC).

In Table~\ref{tab:real_disagree} we report the opinion disagreement $\delta$ for the studied real scale-free networks,
which shows that for all considered networks their opinion disagreement $\delta$ is very small. Particularly,
$\delta$ does not increase with the network size $N$, but seems to be independent of $N$ and tends to small constants.



\subsection{Opinion Disagreement in Model Networks}

We continue to study opinion disagreement for \smblue{three} random model networks. These networks include two scale-free networks: Barab{\'a}si-Albert (BA) network~\cite{BaAl99} and random Apollonian networks~\cite{ZhRoFr06}.
\begin{envblue}
    Meanwhile, we also study opinion disagreement for a growing small-world network~\cite{ZhZhShGu07} without the scale-free structure.  
\end{envblue}

\textit{Barab{\'a}si-Albert Networks.}
Initially, the BA network is a small connected graph containing $m_0 \geq m$ nodes with $m \geq 1$. At every time step, a new node with $m$ links is created and connected to $m$ different existing nodes, with
the probability of being linked to an old node $i$ proportional to the degree of $i$. The growth and preferential attachment procedures are repeated until the network grows to the ideal size $N$. For a BA network with large
$N$, the average degree approaches $2m$, and the node degrees obey a distribution of power-law form $P(k) \sim k^{-3}$ for all $m$. We calculate the opinion disagreement on various BA networks with different network size $N$ ranging from $2,000$ to $50,000$ and different parameter $m=2,3,4$. We present the numerical results for opinion disagreement on all tested BA networks in Figure~\ref{FigConBaAp}(a), which shows that the opinion disagreement of BA networks does not increase with the node number $N$ but tends to an $m$-dependent constant. 

\textit{Random Apollonian Networks.}
We proceed to study the behavior of opinion disagreement $\delta$ for the random $d$-dimensional Apollonian network with $d\ge 2$, which is constructed as follows~\cite{ZhRoFr06}. Initially, the network is a $(d+2)$-clique, the complete graph with $d+2$ nodes. A $(d+1)$-clique is called \emph{active} if it was never chosen before. At each time step, the network grows by creating a new node connecting to an active $(d+1)$-clique selected randomly. The procedure of selecting active $(d+1)$-cliques and creating new nodes is repeated, until the network grows to a desirable size $N$. The Apollonian networks have a power-law degree distribution $P(k) \sim k^{-\gamma}$ with the power exponent $\gamma=2+1/(d-1)$. 

In Figure~\ref{FigConBaAp}(b), we plot the numerical results about the opinion disagreement $\delta$ for random Apollonian networks with various $d$ and $N$. Figure~\ref{FigConBaAp}(b) indicates that as networks grow, the opinion disagreement $\delta$ does not increase, but converges to small constants that rely on the parameter $d$. 

\begin{envblue}

\textit{Growing Small-World Networks.}
We finally study  opinion disagreement  in the growing small-world network model in~\cite{ZhZhShGu07}. This network evolves through sequential growth iterations following these rules: The initial configuration consists of three nodes arranged in a triangular formation on a circular structure. In each subsequent growth iteration, a new node is inserted into a randomly chosen position between existing nodes along the perimeter. The topological evolution then proceeds through two distinct mechanisms. In the connection phase, the new node establishes undirected edges to its two immediate spatial neighbors in the existing configuration. In the subsequent rewiring phase, the model executes the operation of removing the edge  between the two neighbors of the newly connected node  with probability \(p\), which effectively introduces structural randomness while maintains connectivity. It has been proved~\cite{ZhZhShGu07} that this random network exhibits small-world properties. Furthermore, when the removing probability \(p\) converges to \(0\), the resulting random graph exhibits similar structure of the Erd\"os-R\'enyi random graph~\cite{ZhZhShGu07}. On the other hand, when \(p\) converges to \(1\), the resulting random graph is transformed into a small-world network~\cite{ZhZhShGu07}. Figure~\ref{fig:er_small_world_delta} reports the results for the opinion disagreement $\delta$ of growing small-world networks with various $n$ and $p$. As demonstrated in Figure~\ref{fig:er_small_world_delta}, the opinion disagreement $\delta$ grows logarithmically with the node number $n$ for each tested \(p\), indicating that the opinion disagreement $\delta$ of networks without scale-free structures is affected by the network size.
\end{envblue}



\subsection{Result Analysis}
The above results demonstrate that for both real and model scale-free networks, their opinion disagreement $\delta$ is small constants, independent of the number of nodes. Since for the noisy DeGroot model on a graph, their node opinions do not converge, the constant scaling is the minimum scaling that opinion disagreement can achieve. As shown in~\eqref{eq:delta02}, the opinion disagreement of a graph is determined by the eigenvalues and eigenvectors of the normalized adjacency matrix of the graph, which are in turn determined by the graph structure. We argue that the small opinion disagreement observed for the considered power-law graphs is attributed to their scale-free structure, which can be accounted for from the following heuristic arguments.

In a power-law graph $\G$, there exist some large-degree nodes linked to many other nodes in the graph, leading to a small average shortest-path distance~\cite{WaSt98}, which grows at most logarithmically with the node number $N$. The small-world phenomenon facilitates fast spread of opinions on the network. Equation~\eqref{eq:delta02x} shows that opinion disagreement $\delta$ of a graph $\G$ is closely related to the partial mean hitting time $H_i(\G')$ on another graph $\G'$ associated with $\G$. It has been shown that in a scale-free small-world graph $\G$, the hitting time between two nodes are relatively small~\cite{ShZh19}. Since $\G'$ is denser than $\G$~\cite{ChChLiPeTe15b}, $H_i(\G')$ is smaller than $H_i(\G)$, although the latter is already small in a power-law graph~\cite{TeBeVo09}.

\begin{figure}[htbp]
    \centering
    \subfigure[BA networks]
    {\includegraphics[width=0.23\textwidth]{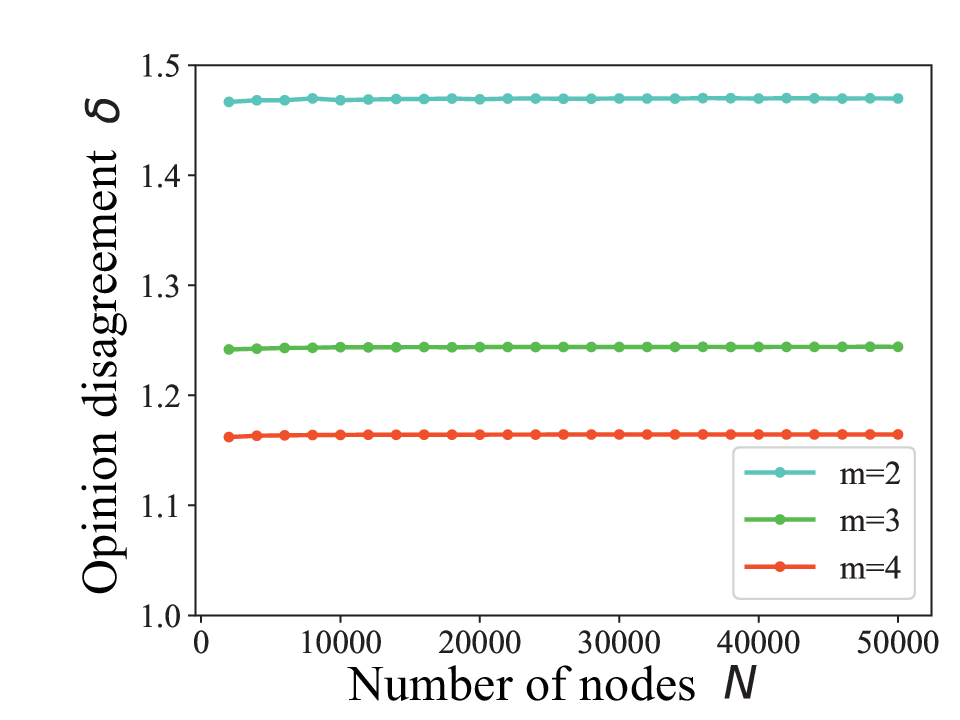} }
    \subfigure[Random Apollonian networks]
    {\includegraphics[width=0.225\textwidth]{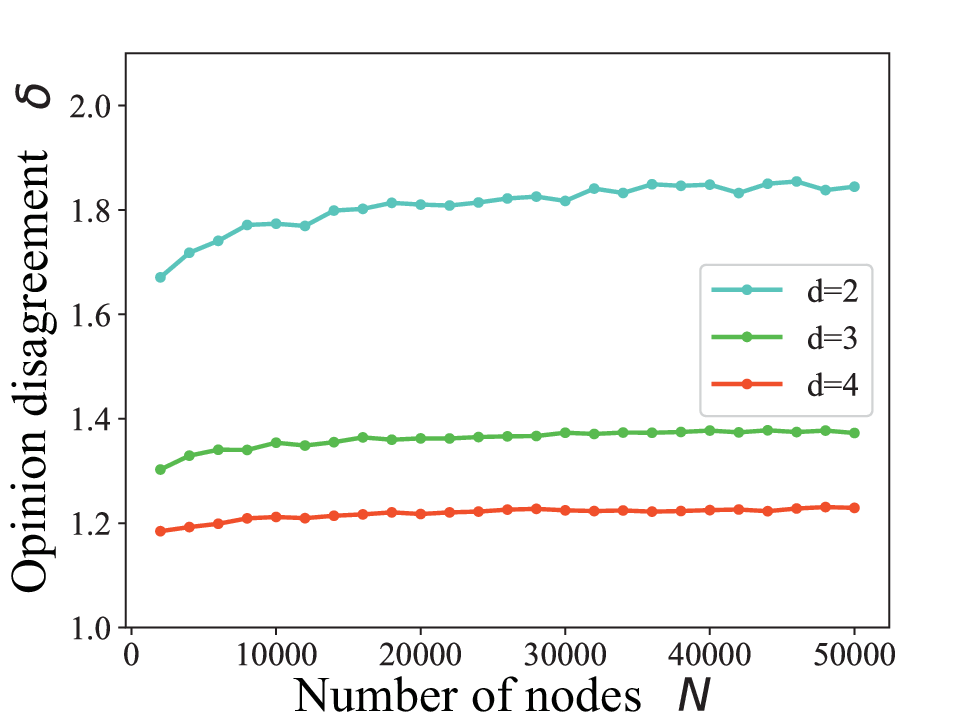}}
    \caption{Numerical results for the opinion disagreement on BA networks and random Apollonian networks.}
    \label{FigConBaAp}
\end{figure}

\begin{figure}[htbp]
    \centering
    \includegraphics[width=0.455\textwidth]{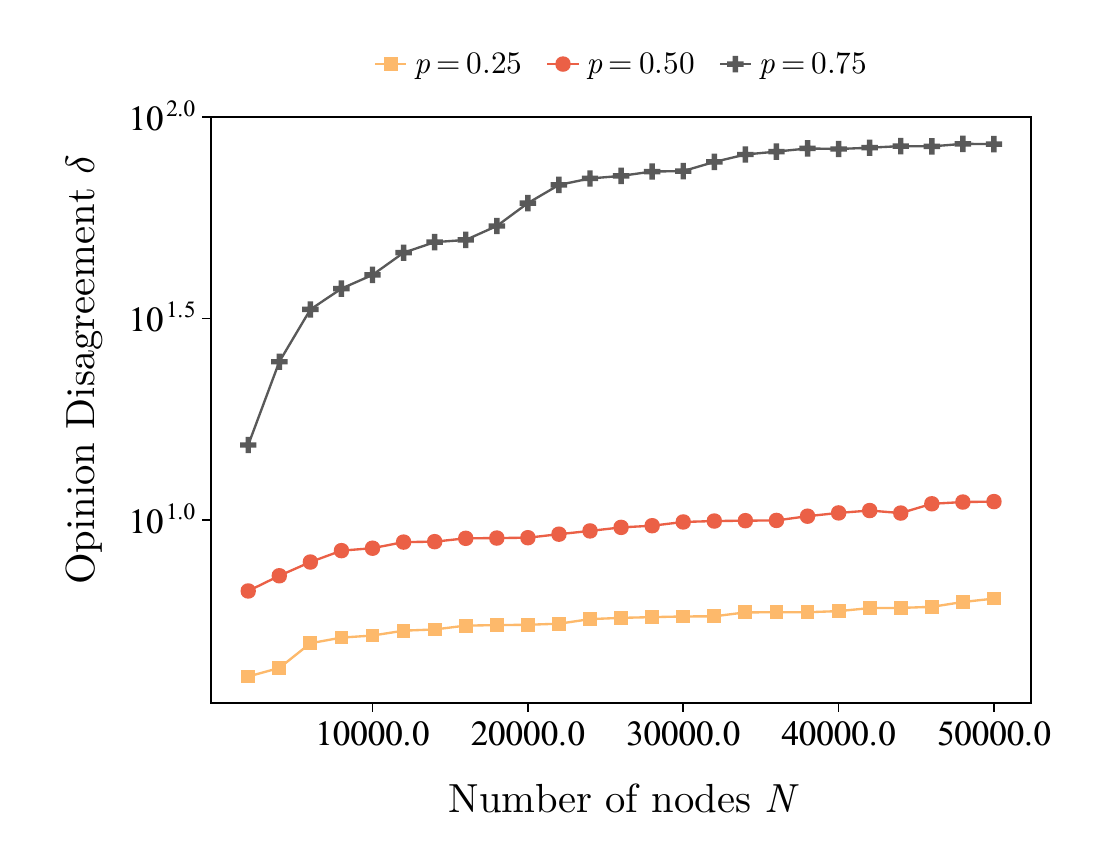}
    \caption{Numerical results for the opinion disagreement on Growing Small-World networks.}
    \label{fig:er_small_world_delta}
\end{figure}

\section{Fast Approximation Algorithms for Estimating Opinion Disagreement}

In~\eqref{eq:delta02}, we provide an explicit expression for opinion disagreement $\delta$ of a graph $\G$, in terms of the eigenvalues and eigenvectors of the normalized adjacency matrix $\SS$. However, direct computation for all eigenvalues and eigenvectors of $\SS$ takes $O(N^3)$ time and $O(N^2)$ space, which is intractable for huge networks.

To address this computation issue, in the section, we first introduce a modification of an exiting nearly linear time algorithm~\cite{ZhXuZh20} for estimating the partial mean hitting time to approximate $\delta$ of a graph $\G$.  We then present a
sublinear time random walk based sampling algorithm to estimate $\delta$.



\subsection{Modified Algorithm Based on Laplacian Solvers}

In this subsection, we introduce a nearly linear time method to estimate opinion disagreement, which serves as a baseline for our sampling algorithm proposed in next subsection.

\subsubsection{Alternative Expression for 
Relevant Quantity}
In~\eqref{eq:delta}, it is shown that the problem of computing $\delta$ can be reduced to determining the partial mean hitting time $H_i(\G')$ on a graph $\G'$ derived from the original graph $\G$, and a recent work~\cite{ZhXuZh20} presents a nearly linear time algorithm to estimate the partial mean hitting time $H_i(\G)$ for all nodes in a graph $\G$. Unfortunately, the algorithm in~\cite{ZhXuZh20} only holds for those graphs without any self-loop, which is not applicable to $H_i(\G')$, since $\G'$ contains self-loops. In order to invoke the approach in~\cite{ZhXuZh20}, a natural idea is to remove self-loops in $\G'$. However, after deleting all self-loops from graph $\G'$, many properties of $\G'$ may no longer be preserved, such as the transition matrix and the normalized Laplacian matrix, both of which are important for related quantities about random walks on $\G'$, including the partial mean hitting time $H_i(\G')$. Fortunately, for a graph with self-loops, the removal of self-loops has little influence on the Laplacian matrix. Below we express $H_i(\G')$ in terms of the diagonal entries of the pseudoinverse for the Laplacian matrix $\LL(\G')$ associated with graph $\G'$, which is helpful for the modification of the algorithm in~\cite{ZhXuZh20}.

Formula~\eqref{eq:delta02} represents the opinion disagreement $\delta$ of a graph $\G$ in terms of the diagonal entries of the pseudoinverse for the normalized Laplacian matrix $\calLL(\G')$ for graph $\G'$. Next we establish the relation between the diagonal entries of the normalized Laplacian matrix and the Laplacian matrix associated with a graph with or without self-loops. By using Theorem~1 in~\cite{Bo13}, it is straightforward to obtain the following result.

\begin{lemma}\cite{XuShZhKaZh20}
    \label{lem:pinv}
    For a connected undirected weighted graph $\G=(\V,\E)$ with or without self-loops, let $\LL^{\dagger}$ and $\calLL^{\dagger}$ be the Moore-Penrose inverse of its Laplacian matrix $\LL$ and normalized Laplacian matrix $\calLL$, respectively. Then,
    \small{$\calLL^{\dagger} =\Big (\II - \frac{1}{d_{\rm sum}}\DD^{\frac{1}{2}} \JJ \DD^{\frac{1}{2}}\Big) \DD^{\frac{1}{2}} \LL^{\dagger} \DD^{\frac{1}{2}}\Big(\II - \frac{1}{d_{\rm sum}}\DD^{\frac{1}{2}} \JJ \DD^{\frac{1}{2}}\Big )$}, where $\JJ=\one \one^{\top}$.
\end{lemma}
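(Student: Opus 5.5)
Write $\calLL = \DD^{-1/2}\LL\DD^{-1/2}$; this holds with or without self-loops, because a loop at $i$ adds the same weight to $a_{ii}$ and to $d_i$, so it cancels in $\LL=\DD-\AA$. Set $\mathbf{u}=\DD^{1/2}\one/\sqrt{d_{\rm sum}}$, a unit vector. Then
\begin{equation*}
\PP_u \defeq \II - \mathbf{u}\mathbf{u}^\top = \II-\frac{1}{d_{\rm sum}}\DD^{1/2}\JJ\DD^{1/2},
\end{equation*}
which is exactly the bracketed factor in the statement. Connectivity gives $\ker\calLL=\mathrm{span}\{\mathbf{u}\}$ and $\ker\LL=\mathrm{span}\{\one\}$. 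So $\PP_u$ is the orthogonal projector onto $\mathrm{range}(\calLL)$.

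The plan is to verify directly that $\XX\defeq\PP_u\DD^{1/2}\LL^\dagger\DD^{1/2}\PP_u$ satisfies the characterization of the pseudoinverse of a symmetric matrix. That is, $\XX$ is symmetric, $\XX\calLL=\calLL\XX=\PP_u$ (the projector onto $\mathrm{range}(\calLL)$), and $\mathrm{range}(\XX)\subseteq\mathrm{range}(\calLL)$. Symmetry is immediate, and the range condition holds because $\XX$ begins with $\PP_u$. These three conditions give the Penrose equations $\calLL\XX\calLL=\calLL$, $\XX\calLL\XX=\XX$ (using $\PP_u\XX=\XX$), and symmetry of $\calLL\XX$ and $\XX\calLL$, so $\XX=\calLL^\dagger$.

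The key computation is $\XX\calLL$. Since $\calLL\mathbf{u}=\mathbf{0}$, we have $\PP_u\calLL=\calLL$, and therefore
\begin{equation*}
\XX\calLL=\PP_u\DD^{1/2}\LL^\dagger\DD^{1/2}\DD^{-1/2}\LL\DD^{-1/2}=\PP_u\DD^{1/2}\LL^\dagger\LL\DD^{-1/2}.
\end{equation*}
Now $\LL^\dagger\LL=\II-\frac{1}{N}\JJ$, so
\begin{equation*}
\XX\calLL=\PP_u-\frac{1}{N}\PP_u\DD^{1/2}\one\one^\top\DD^{-1/2}.
\end{equation*}
The correction term vanishes because $\DD^{1/2}\one$ is parallel to $\mathbf{u}$, so $\PP_u\DD^{1/2}\one=\mathbf{0}$. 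Hence $\XX\calLL=\PP_u$, and $\calLL\XX=(\XX\calLL)^\top=\PP_u$ follows by symmetry.

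The main obstacle is this mismatch of null spaces. $\LL^\dagger\LL$ projects orthogonally to $\one$, whereas $\calLL$ has kernel $\DD^{1/2}\one$, so the factor $\DD^{\pm1/2}$ turns an orthogonal projector into an oblique one. The outer $\PP_u$ factors exist to absorb this: they make the oblique $\frac{1}{N}\JJ$ term disappear. Without them, $\DD^{1/2}\LL^\dagger\DD^{1/2}$ alone would not be a valid pseudoinverse. The loop-independence in the first paragraph is the only point where the phrase ``with or without self-loops'' enters, and it is worth stating explicitly since $\G'$ has loops. Alternatively, one could cite Theorem~1 of~\cite{Bo13} for pseudoinverses of products $\AA\BB\CC$ with invertible $\AA,\CC$, but the direct verification above is self-contained.
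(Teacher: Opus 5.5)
Your proof is correct, and it is more self-contained than the paper's: the paper gives no argument of its own, only attributing the lemma to \cite{XuShZhKaZh20} and saying it follows straightforwardly from Theorem~1 of \cite{Bo13}. You replace that citation with a direct check of the four Penrose equations. The candidate is symmetric. Its range lies in $\mathrm{range}(\calLL)$ because it begins with the orthogonal projector $\II-\frac{1}{d_{\rm sum}}\DD^{1/2}\JJ\DD^{1/2}$. Its product with $\calLL$ on either side is that same projector, because the projector annihilates $\DD^{1/2}\one$ and so kills the oblique term coming from $\LL^\dagger\LL=\II-\frac{1}{N}\JJ$. The citation buys brevity, while your route buys self-containment and shows why the outer projectors are needed. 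They are redundant exactly when all degrees are equal, since $\DD^{1/2}\LL^\dagger\DD^{1/2}$ has range inside $\mathrm{range}(\calLL)$ if and only if $\LL^\dagger\DD\one=\mathbf{0}$; so your remark that the bare product is never a valid pseudoinverse is slightly too strong. One point of emphasis should also be adjusted. The identity $\calLL=\DD^{-1/2}\LL\DD^{-1/2}$ is pure algebra from $\calLL=\II-\DD^{-1/2}\AA\DD^{-1/2}$ and $\LL=\DD-\AA$, and needs no remark about loops. The self-loop convention (a loop's weight counted once in $d_i$) actually enters through $\LL\one=\mathbf{0}$, i.e.\ $\ker\LL=\mathrm{span}\{\one\}$ for connected $\G$. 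You rely on this both for $\LL^\dagger\LL=\II-\frac{1}{N}\JJ$ and for $\ker\calLL=\mathrm{span}\{\DD^{1/2}\one\}$.
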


According to Lemma~\ref{lem:pinv}, we obtain the following formula {\small $\DD^{\frac{1}{2}}(\II - \frac{1}{d_{\rm sum}}\DD^{\frac{1}{2}} \JJ \DD^{\frac{1}{2}})\ee_i = \sqrt{d_i}(\ee_i - \ppi)$},
which leads to {\small $\ee_i^\top \calLL^\dagger \ee_i = d_i(\ee_i-\ppi)^\top\LL^\dagger(\ee_i-\ppi)$}.
Then, the opinion disagreement $\delta$ of a graph $\G$ can be expressed in terms of the diagonal elements of $\LL(\G')^{\dagger}$, or the quadratic forms of $\LL(\G')^{\dagger}$, as {\small $\delta=d_{\rm sum}\sum_{i=1}^{N}\pi_i^2 (\ee_i-\ppi)^\top\LL(\G')^\dagger(\ee_i-\ppi)$}, which is instrumental in the design of the modified algorithm.

\subsubsection{Spectral Graph Sparsification}
As can be seen from~\eqref{Lap}, direct computation of pseudoinverse $\LL^{\dagger}(\G')$ of $\LL(\G')$ needs the square of matrix $\PP$, which involves matrix multiplication taking $O(N^3)$ time. Thus, it is infeasible to construct the graph $\G'$ explicitly when the number of nodes $N$ in the original graph $\G$ is large. In order to reduce the computation complexity, we construct Laplacian matrix $\LL(\tilde{\G})$ corresponding to graph $\tilde{\G}$ and use it
to approximate matrix $\LL(\G')$. 

\begin{lemma}\label{eq:Ltil}~\cite{ChChLiPeTe15b} For a graph $\G'$ with Laplacian matrix $\LL(\G') = \DD - \DD \kh{\DD^{-1}\AA}^2$,
where $\AA$ and $\DD$ are, respectively, the adjacency matrix and degree matrix of the associated graph $\G$ with $N$ nodes and $M$ edges, one can construct a graph $\tilde{\G}$ with Laplacian matrix $\LL(\tilde{\G})$ having $O(M\eps^{-2}\log N)$ non-zero entries, in time $O(M\eps^{-2}\log^2 N)$, such that
\begin{align*}
(1-\eps)\, \xx^\top\LL(\G')\xx \le \xx^\top \LL(\tilde{\G})\xx \le (1+\eps)\, \xx^\top\LL(\G')\xx\, 
\end{align*}
holds  for any vector $\xx\in\mathbb{R}^N$.
\end{lemma}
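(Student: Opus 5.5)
This is the random-walk sparsification result of Cheng et al.; the plan is to build $\tilde{\G}$ from sampled length-two paths of $\G$. The key identity is the path decomposition of $\LL(\G')$. Since $\DD(\DD^{-1}\AA)^2=\AA\DD^{-1}\AA$, the graph $\G'$ is a sum of weighted edges indexed by length-two walks $(i,k,j)$ in $\G$. Each such walk contributes an edge $\{i,j\}$ of weight $a_{ik}a_{kj}/d_k$, and this edge is a self-loop when $i=j$; self-loops contribute nothing to a Laplacian. Hence
\begin{align*}
\LL(\G')=\sum_{k}\sum_{i,j} \frac{a_{ik}a_{kj}}{d_k}\,\frac{1}{2}(\ee_i-\ee_j)(\ee_i-\ee_j)^\top ,
\end{align*}
which is a nonnegative sum of rank-one edge Laplacians, one per two-step path.

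Next I would bound effective resistances. Each path term $(i,k,j)$ corresponds to the series connection of the original edges $\{i,k\}$ and $\{k,j\}$. By Rayleigh monotonicity, the effective resistance between $i$ and $j$ in $\G'$ is at most a constant times $1/a_{ik}+1/a_{kj}$. The reason is that $\G'$ dominates a constant multiple of $\G$ in the spectral order, i.e.\ $\LL(\G')\succeq c\,\LL(\G)$, since $\II-\PP^2=(\II-\PP)(\II+\PP)$; the non-bipartite or lazy aspect must be handled here. Multiplying the resistance bound by the path weight gives a leverage-score upper bound for the path edge $(i,k,j)$ of order
\begin{align*}
\frac{a_{ik}a_{kj}}{d_k}\left(\frac{1}{a_{ik}}+\frac{1}{a_{kj}}\right).
\end{align*}
Summing over $i$ and $j$ for a fixed middle edge shows that the total of these upper bounds is $O(M)$.

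Then I would apply importance sampling. Sample $O(M\eps^{-2}\log N)$ two-step paths with probabilities proportional to the leverage upper bounds, and reweight each sample by the inverse of its probability. Sampling a path is cheap: pick an edge $\{i,k\}$ of $\G$, then take one random-walk step from $k$. Each step costs $O(\log N)$ using alias or prefix-sum tables, which gives the stated $O(M\eps^{-2}\log^2 N)$ time. The matrix Chernoff bound of the Spielman--Srivastava type then yields $(1\pm\eps)$ spectral approximation with high probability.

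The main obstacle is the leverage bound, namely showing that $\LL(\G')$ is not much smaller than $\LL(\G)$. This would fail for bipartite-like spectra, where $\lambda_N\approx-1$. My first attempt is to restrict the argument to each path's own two-edge subgraph and bound its resistance directly, which sidesteps global spectral comparison.
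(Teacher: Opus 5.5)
\paragraph{Verdict.} There is a genuine gap, in the effective-resistance step on which the whole $O(M\eps^{-2}\log N)$ count rests.

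\paragraph{What is already fine.} The paper gives no proof of this lemma; it imports it from Cheng et al. Your outline follows their path-sampling construction. These parts are correct:
\begin{itemize}
\item the decomposition of $\LL(\G')$ into one edge per length-two walk;
\item the $O(M)$ leverage total, once the per-walk resistance bound is available;
\item the sampler (pick an edge, then take one random-walk step);
\item the matrix Chernoff step.
\end{itemize}

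\paragraph{Your stated justification fails.} You claim $\LL(\G')\succeq c\,\LL(\G)$ for a constant $c$. This is false. The two Laplacians are congruent via $\DD^{1/2}$ to the commuting matrices $\II-\SS^2$ and $\II-\SS$, so the best possible constant is $c=\min_{k\ge 2}(1+\lambda_k)=1+\lambda_N$. Non-bipartiteness only makes this positive. For an odd cycle, $1+\lambda_N=1-\cos(\pi/N)=\Theta(N^{-2})$. Using this $c$ multiplies the leverage total, and hence the sample count, by $1/(1+\lambda_N)$. Switching to a lazy walk is not an option, because it changes the matrix being sparsified.

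\paragraph{Your fallback also fails as stated.} Restricting to the walk's own two-edge subgraph does not work.
\begin{itemize}
\item The edges $\{i,k\}$ and $\{k,j\}$ are not edges of $\G'$.
\item The single $\G'$-edge produced by the walk $(i,k,j)$ has weight $a_{ik}a_{kj}/d_k$. That only gives a leverage bound of $1$ per walk, so the total is the number of length-two walks, $\sum_k|\N_k|^2$. For a star this is $\Theta(N^2)$, while $M=N-1$.
\item Eliminating $k$ from the two-edge path alone gives an edge of weight $a_{ik}a_{kj}/(a_{ik}+a_{kj})$. This exceeds the $\G'$-weight, so it is not dominated by $\LL(\G')$.
\end{itemize}

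\paragraph{The missing idea: group walks by their middle vertex.} Write $\boldsymbol{a}_k$ for the $k$-th column of $\AA$. Then
\[
\LL(\G')=\sum_k\big(\mathrm{diag}(\boldsymbol{a}_k)-\boldsymbol{a}_k\boldsymbol{a}_k^\top/d_k\big).
\]
The $k$-th term is exactly the Schur complement, onto the leaves, of a star whose new centre is joined to each $i\in\N_k$ with weight $a_{ik}$.

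Schur complements preserve effective resistances, so in this term alone $R(i,j)=1/a_{ik}+1/a_{kj}$. The remaining terms are positive semidefinite, so Rayleigh monotonicity gives
\[
R_{\G'}(i,j)\le \frac{1}{a_{ik}}+\frac{1}{a_{kj}},
\]
with no spectral or bipartiteness assumption.

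An equivalent view: $\LL(\G')=\DD-\AA\DD^{-1}\AA$ is the Schur complement of $\begin{pmatrix}\DD&-\AA\\-\AA&\DD\end{pmatrix}$ onto one copy of $\V$. In this bipartite double cover, $i$, the second copy of $k$, and $j$ form a two-edge path.

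With this bound, the leverage of walk $(i,k,j)$ is at most $(a_{ik}+a_{kj})/d_k$. The total is then $\sum_k|\N_k|=O(M)$, and the rest of your argument goes through unchanged.
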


In the sequel, we refer the technique in Lemma~\ref{eq:Ltil} to \textsc{Sparsify}$(\G,\eps)$, which takes a graph $\G$ and an approximation parameter $\eps$, and returns a Laplacian matrix $\LL(\tilde{\G})$ corresponding to a graph $\tilde{\G}$ that is much sparser than $\G'$. Lemma~\ref{eq:Ltil} shows that the opinion disagreement $\delta$ on graph $\G$ can be well preserved, if we compute it using $\LL^{\dagger}(\tilde{\G})$, instead of $\LL^{\dagger}(\G')$.

Lemma~\ref{eq:Ltil} returns a sparse graph $\tilde{\G}$ in nearly linear time with respect to the number of edges $M$, whose Laplacian is similar to that of $\G'$. Since graph $\tilde{\G}$ may contain self-loops, one cannot directly adopt the method in~\cite{ZhXuZh20} to compute the partial mean hitting time $H_i(\tilde{\G})$. In order to approximate opinion disagreement $\delta$ of a graph $\G$, we delete all self-loops in graph $\tilde{\G}$ to obtain graph $\bar{\G}$ with $\bar{M}$ edges. 
Let $\bar{\BB} \in \mathbb{R}^{\bar{M} \times N}$ be the edge-node incidence matrix of graph $\bar{\G}$. For every edge $e$ with end nodes $i$ and $j$, a direction is assigned arbitrarily. Then the entry $b_{eu}$ at row associated with edge $e$ and column associated with nodes $u$ is defined as follows: $b_{eu} = 1$ if $u$ is the tail of $e$, $b_{eu}=-1$ if $u$ is the head of $e$, and $b_{eu}=0$ otherwise. Let $\bar{\WW} \in \mathbb{R}^{\bar{M} \times \bar{M}}$ be the diagonal matrix, with the diagonal corresponding to edge $e$ being the weight of $e$ in $\bar{\G}$. Then the Laplacian matrix $\LL(\bar{\G})$ of $\bar{\G}$ can be written as $\LL(\bar{\G})=\LL(\tilde{\G})=\bar{\BB}^\top\bar{\WW}\bar{\BB}$. In this way, we reduce computing the diagonal element $\calLL_{ii}^\dagger(\G')$ of matrix $\calLL (\G')^{\dagger}$ in~\eqref{eq:delta02} to approximating the quadratic form $(\ee_i-\ppi)^\top\LL(\bar{\G})^\dagger(\ee_i-\ppi)$ for matrix $\LL(\bar{\G})^\dagger $ associated with graph $\bar{\G}$.

Note that in order to evaluate the opinion disagreement $\delta$ for graph \(\G\), we introduce three more graphs \(\G'\), \(\tilde{\G}\), and \(\bar{\G}\). Although in Table~\ref{tab:notation} we provide a brief introduction to these graphs, we further give the main differences and connections among these graphs. First, by construction~\cite{ChChLiPeTe15b}, every node in \(\G\) and \(\G'\) has identical degree but different number of edges, with \(\G\) being sparser than \(\G'\). Thus, the properties of \(\G\) and \(\G'\) are often quite disparate. For example, even for unweighted \(\G\), the corresponding \(\G'\) is weighted, indicating that \(\G\) and \(\G'\) have distinct degree distribution, pairwise shortest paths, and weight distribution. Second, \(\tilde{\G}\) is a spectral sparsifier of \(\G'\). In other words, \(\tilde{\G}\) is a sparse subgraph of \(\G'\), but approximates the spectral properties for  Laplacian matrix of \(\G'\). In general, \(\tilde{\G}\) may not preserve the structural properties of \(\G'\). Finally, since graph \(\bar{\G}\) is obtained from \(\tilde{\G}\) by deleting all self-loops in \(\tilde{\G}\), these two graphs have the same Laplacian matrix and almost identical structural properties.

\subsubsection{Approximation of Key Quantity}
For simplicity, we write $\LLbar=\LL(\bar{\G})$ and define $C(i)=(\ee_i-\ppi)^\top\LL(\tilde{\G})^\dagger(\ee_i-\ppi)=(\ee_i-\ppi)^\top\LLbar^\dagger(\ee_i-\ppi)$. Then, $\LLbar^{\dagger}=\left(\LLbar+\frac{1}{N} \JJ\right)^{-1}-\frac{1}{N} \JJ$, $\LLbar \LLbar^{\dagger}=\LLbar^{\dagger} \LL=\II-\frac{1}{N} \JJ$, and
${\LLbar} \JJ = \JJ {\LLbar}={\LLbar}^{\dagger} \JJ = \JJ {\LLbar}^{\dagger}= \mathbf{O}$, where $\mathbf{O}$ denotes the zero matrix~\cite{GhBoSa08}. 
Thus, for each node $i$ in graph $\tilde{\G}$ or $\bar{\G}$, the quantity $C(i)$ can be represented as {\small $ C(i)=(\ee_i-\ppi)^\top\LLbar ^\dagger(\ee_i-\ppi)=(\ee_i - \ppi)^{\top} \LLbar^{\dagger} \LLbar \LLbar^{\dagger} (\ee_i - \ppi)
    =(\ee_i - \ppi)^{\top} \LLbar^{\dagger}\bar{ \BB}^{\top}\bar{ \WW} \bar{\BB} \LLbar^{\dagger} (\ee_i - \ppi)
    =\|\bar{\WW}^{\frac{1}{2}} \bar{\BB} \bar{\LL}^{\dagger} (\ee_i - \ppi)\|^2$},
which expresses the quantity $C(i)$ in terms of the Euclidian distance between two vectors $\bar{\WW}^{\frac{1}{2}} \bar{\BB} \bar{\LL}^{\dagger} \ee_i$ and $\bar{\WW}^{\frac{1}{2}} \bar{\BB} \bar{\LL}^{\dagger} \ppi$ of dimension $\bar{M}$.

By using the Johnson-Lindenstraus (JL) Lemma~\cite{JoLi84}, 
the Euclidian distance between two vectors $\bar{\WW}^{\frac{1}{2}} \bar{\BB} \bar{\LL}^{\dagger} \ee_i$ and $\bar{\WW}^{\frac{1}{2}} \bar{\BB} \bar{\LL}^{\dagger} \ppi$ can be well approximated by the distance between two vectors of low dimension. Let $\QQ$ be a
$k\times \bar{M}$ matrix with $k\ge 24\log N/\epsilon^2$, where each entry is equal to $1/\sqrt{k}$ or $- 1/\sqrt{k}$ with identical probability $1/2$. Then,
{\small $
    (1-\epsilon) C(i)
    \leq
    \|\QQ\bar{\WW}^{\frac{1}{2}} \bar{\BB} \bar{\LL}^{\dagger}(\ee_{i} - \ppi)\|^{2}
    \leq
    (1+\epsilon)  C(i)
$} holds with probability at least $1-1/N$.

Despite the fact that $\|\QQ\bar{\WW}^{\frac{1}{2}} \bar{\BB} \bar{\LL}^{\dagger}(\ee_{i} - \ppi)\|^{2}$ is a good approximation for $\|\bar{\WW}^{\frac{1}{2}} \bar{\BB} \bar{\LL}^{\dagger}(\ee_{i} - \ppi)\|^{2}$ and $\QQ\bar{\WW}^{\frac{1}{2}} \bar{\BB}$ can be quickly computed by sparse matrix multiplication, directly computing matrix $\ZZ=\QQ\bar{\WW}^{\frac{1}{2}} \bar{\BB} \bar{\LL}^{\dagger}$ involves inverting $\bar{\LL}+\frac{1}{N}\JJ$ to obtain $\bar{\LL}^{\dagger}$. In order to avoid matrix inverse operation, by using Laplacian solvers~\cite{KoMiPe11,CoKyMiPaPeRaSu14},  
we alternatively solve the system of equations $\bar{\LL} \zz_i=\qq_i$, $i=1,2\ldots,k$, where $\zz^\top_i$ and $\qq^\top_i$ are, respectively, the $i$th row of $\ZZ$ and $\QQ\bar{\WW}^{\frac{1}{2}} \bar{\BB}$. Here we use the Laplacian solver in~\cite{CoKyMiPaPeRaSu14}, an algorithm $\xx = \mathtt{LaplSolve}(\bar{\LL},\yy,\kappa)$ with complexity $\Otil \left(M \log(1/\kappa) \right)$, where the notation $\Otil(\cdot)$ hides $\mathrm{poly} \log $ factors. It takes a Laplacian matrix $\bar{\LL}$, a column vector $\yy$, and an error parameter $\kappa > 0$ as inputs, and returns a column vector $\xx$ satisfying $\boldsymbol{1}^\top\xx = 0$ and
$\|\xx - \bar{\LL}^{\dagger} \yy\|_{\bar{\LL}} \leq \kappa \|\bar{\LL}^{\dagger} \yy\|_{\bar{\LL}}$,
where $\norm{\yy}_{\bar{\LL}} = \sqrt{\yy^{\top} \bar{\LL} \yy}$.

For graph $\bar{\G}$, let $\bar{d}_i$ be the degree of node $i$, and $\bar{d}_{\rm sum}$ the total degree of all nodes, and let $w_{\rm max}$ and $w_{\rm min}$, denote the maximum and minimum edge weight, respectively.
Combining the JL lemma and the Laplacian solver, $C(i)$ can be efficiently approximated, as stated in the following lemma. 

\begin{lemma}\cite{ZhXuZh20}\label{lem:error1}
Given an approximate factor $\epsilon \le 1/2$ and a $k\times N$ matrix $\ZZ$ that satisfies
$(1-\epsilon) C(i)\leq \|\ZZ (\ee_{i} - \ppi)\|^{2} \leq (1+\epsilon) C(i)$
    for any node $i\in \V$ of graph $\bar{\G}$, as well as
    {\small
    $\quad (1-\epsilon) \|\bar{\WW}^{\frac{1}{2}}\bar{ \BB} \bar{\LL}^{\dagger} (\ee_i-\ee_j)\|^2
        \leq
        \|\ZZ (\ee_i - \ee_j)\|^2
        \leq
        (1+\epsilon) \|\bar{\WW}^{\frac{1}{2}} \bar{\BB}\bar{ \LL}^{\dagger} (\ee_i-\ee_j)\|^2$
    } 
    for any pair of nodes $i,j \in \V$.
    Let $\zz_i$ be the $i$th row of $\ZZ$ and let $\tilde{\zz}_i$ be an approximation of $\zz_i$ for all $i \in \{1,2,...,k\}$, satisfying $ \|\zz_i-\tilde{\zz}_i\|_{\bar{\LL}}\le \kappa
        \|\zz_{i}\|_{\bar{\LL}}$,
    where
    \begin{equation}\label{EE14}
        \kappa \leq \frac{\epsilon }{3} \frac{\bar{d}_{\rm sum}-\bar{d}_i}{\bar{d}_{\rm sum}}
        \sqrt{\frac{(1-\epsilon) w_{\min}}{(1+\epsilon) N^4 w_{\max}}}.
    \end{equation}
    Then for any node $i \in \V$,
    {\small $
                (1 - \epsilon)^2 C(i)
                \leq
                \|\ZZtil (\ee_{i} - \ppi)\|^2
                \leq
                (1 + \epsilon)^2 C(i)
            $},
    where $\ZZtil = [\tilde{\zz}_1, \tilde{\zz}_2, ..., \tilde{\zz}_k]^\top$.
\end{lemma}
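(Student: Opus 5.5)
Our approach is a triangle inequality in the Euclidean norm: we bound the absolute error $\bigl|\|\ZZtil(\ee_i-\ppi)\|-\|\ZZ(\ee_i-\ppi)\|\bigr|$ by a small multiple of $\sqrt{C(i)}$, and then combine it with the JL guarantee on $\ZZ$. Write $\yy=\ee_i-\ppi$. Then
\[
\bigl|\|\ZZtil\yy\|-\|\ZZ\yy\|\bigr|\le \|(\ZZtil-\ZZ)\yy\|\le \|(\ZZtil-\ZZ)\|_F\,\|\yy\|.
\]
Since each row difference $\tilde{\zz}_j-\zz_j$ is orthogonal to $\one$, one has $\|\xx\|^2\le \|\xx\|_{\bar\LL}^2/\lambda_2(\bar\LL)$ for such $\xx$. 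This gives
\[
\|\ZZtil-\ZZ\|_F^2\le \frac{\kappa^2}{\lambda_2(\bar\LL)}\sum_j\|\zz_j\|_{\bar\LL}^2=\frac{\kappa^2}{\lambda_2(\bar\LL)}\operatorname{tr}(\ZZ\bar\LL\ZZ^\top).
\]
We have $\operatorname{tr}(\ZZ\bar\LL\ZZ^\top)=\sum_{e=(u,v)}w_e\|\ZZ(\ee_u-\ee_v)\|^2$. By the pairwise hypothesis, each term is at most $(1+\epsilon)w_e\,R_{uv}$, where $R_{uv}=(\ee_u-\ee_v)^\top\bar\LL^\dagger(\ee_u-\ee_v)$. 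Foster's theorem, $\sum_e w_eR_e=N-1$, then bounds the trace by $(1+\epsilon)(N-1)$.

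Next we need a lower bound on $C(i)$ so that the absolute error becomes relative. Using $C(i)\ge \|\yy\|^2/\lambda_{\max}(\bar\LL)$ (valid since $\yy\perp\one$ after projection, and we can replace $\yy$ by its projection because $\bar\LL^\dagger\JJ=\mathbf O$), together with $\lambda_{\max}\le 2\max_u\bar d_u\le 2N w_{\max}$ and $\|\yy\|^2\ge (1-\pi_i)^2=\bigl((\bar d_{\rm sum}-\bar d_i)/\bar d_{\rm sum}\bigr)^2$, we get a lower bound for $C(i)$. We also use $\lambda_2(\bar\LL)\ge w_{\min}/N^2$, a standard bound for connected weighted graphs. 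Here $\ppi$ should be the stationary vector for $\bar\G$, i.e.\ $\pi_i=\bar d_i/\bar d_{\rm sum}$, which explains the factor in~\eqref{EE14}. Plugging these in gives
\[
\|(\ZZtil-\ZZ)\yy\|^2\le \kappa^2\,\frac{N^2(1+\epsilon)N}{w_{\min}}\,\|\yy\|^2.
\]
Comparing with $C(i)\ge\|\yy\|^2/(2Nw_{\max})$, the choice of $\kappa$ in~\eqref{EE14} is designed to give $\|(\ZZtil-\ZZ)\yy\|\le \frac{\epsilon}{3}\sqrt{(1-\epsilon)C(i)}$, up to constant bookkeeping absorbed by the $N^4$ in~\eqref{EE14}. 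We will actually use the slightly different route of bounding $\|\yy\|$ itself via the $(\bar d_{\rm sum}-\bar d_i)/\bar d_{\rm sum}$ factor, matching the displayed expression.

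To finish, we combine this with $\|\ZZ\yy\|\in[\sqrt{1-\epsilon},\sqrt{1+\epsilon}]\sqrt{C(i)}$. This gives
\[
\|\ZZtil\yy\|\in\Bigl[\sqrt{1-\epsilon}-\tfrac{\epsilon}{3}\sqrt{1-\epsilon},\ \sqrt{1+\epsilon}+\tfrac{\epsilon}{3}\sqrt{1-\epsilon}\Bigr]\sqrt{C(i)}.
\]
Squaring and using $\epsilon\le 1/2$ places this within $[(1-\epsilon)^2,(1+\epsilon)^2]C(i)$: the lower end is $(1-\epsilon)(1-\epsilon/3)^2\ge(1-\epsilon)^2$, and the upper end is checked by elementary inequality. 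The main obstacle is the constant bookkeeping in the spectral bounds: matching $\lambda_2$, $\lambda_{\max}$ and Foster's bound to exactly the $N^4 w_{\max}/w_{\min}$ and $(\bar d_{\rm sum}-\bar d_i)/\bar d_{\rm sum}$ factors in~\eqref{EE14}. Here I would first try the crude bounds $\lambda_2\ge w_{\min}/N^2$ and $\lambda_{\max}\le N w_{\max}$ and check that the slack suffices.
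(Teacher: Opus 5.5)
The paper gives no proof of this lemma; it is quoted from \cite{ZhXuZh20}. The form of its $\kappa$ bound is the Spielman--Srivastava perturbation bound, carried over from $\ee_u-\ee_v$ to $\ee_i-\ppi$, and your proposal follows exactly that template. You use the triangle inequality, and $\lambda_2(\bar{\LL})$ to pass from the $\bar{\LL}$-seminorm to the Euclidean norm. You bound $\operatorname{tr}(\ZZ\bar{\LL}\ZZ^\top)=\sum_e w_e\|\ZZ(\ee_u-\ee_v)\|^2\le(1+\epsilon)(N-1)$ via the pairwise hypothesis and Foster's theorem, and you use $C(i)\ge\|\yy\|^2/\lambda_{\max}(\bar{\LL})$.

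The route you actually write out, with $\|\yy\|^2$ cancelling, is correct and complete. Take the bounds you name at the end: $\lambda_2(\bar{\LL})\ge w_{\min}/N^2$, and $\lambda_{\max}(\bar{\LL})\le Nw_{\max}$ because $\bar{\LL}$ is dominated by $w_{\max}$ times the Laplacian of the complete graph. These give
\[
\|(\ZZtil-\ZZ)\yy\|^2\le\kappa^2(1+\epsilon)(N-1)N^3\,\frac{w_{\max}}{w_{\min}}\,C(i)\le\frac{\epsilon^2}{9}(1-\epsilon)\,C(i),
\]
and your squaring step finishes. If you use $\lambda_{\max}\le 2Nw_{\max}$ instead, you lose a factor $2$ that the $N^4$ does not absorb. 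The squaring step still has room for it, since it tolerates errors up to $\frac{\epsilon}{2}\sqrt{(1-\epsilon)C(i)}$.

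Two things to clean up.

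First, because $\|\yy\|^2$ cancels, the factor $(\bar d_{\rm sum}-\bar d_i)/\bar d_{\rm sum}\le 1$ in \eqref{EE14} is pure slack for you. The only property of $\ppi$ you use is $\one^\top\ppi=1$. So drop both the reinterpretation of $\ppi$ as the stationary vector of $\bar{\G}$ and the ``slightly different route'' you announce.
\begin{itemize}
\item In the paper, $C(i)$ is defined with $\pi_i=d_i/d_{\rm sum}$, the stationary distribution of $\G$ and of $\G'$. This is what makes $d_{\rm sum}\sum_i\pi_i^2\tilde C(i)$ estimate $\delta$, so redefining $\ppi$ changes the statement.
\item The alternative route bounds the error term by an absolute quantity and lower-bounds $C(i)$ through $\|\yy\|\ge 1-\pi_i$. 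It would yield the unbarred factor $(d_{\rm sum}-d_i)/d_{\rm sum}$ and need an extra comparison to reach \eqref{EE14}.
\end{itemize}
The barred factor is presumably inherited from \cite{ZhXuZh20}, where $\ppi$ and the solver's graph belong to the same graph. Your argument shows it can simply be dropped, which also gives a single node-independent tolerance for \texttt{LaplSolve}.

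Second, $\one^\top(\tilde{\zz}_j-\zz_j)=0$ is not a hypothesis of the lemma. It does hold in Algorithm~1, because \texttt{LaplSolve} returns $\one^\top\xx=0$ and $\zz_j=\bar{\LL}^\dagger\qq_j$. You do not need it, though. Since $\one^\top\yy=0$, replacing each row difference by $(\II-\frac{1}{N}\JJ)(\tilde{\zz}_j-\zz_j)$ leaves $(\ZZtil-\ZZ)\yy$ unchanged. That projected vector has squared Euclidean norm at most $\|\tilde{\zz}_j-\zz_j\|_{\bar{\LL}}^2/\lambda_2(\bar{\LL})$.
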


\subsubsection{Modified Algorithm}
Based on the aforementioned techniques, we introduce an algorithm \textsc{ApproxDelta}$(\calG, \epsilon)$ to estimate the opinion disagreement $\delta$ for an arbitrary graph $\G$. The pseudocode of the approximation algorithm is presented in Algorithm~1 and the performance is characterized in Theorem~\ref{th:alg}.

\begin{theorem}\label{th:alg}
    There is an algorithm \textsc{ApproxDelta}$(\G,\eps)$ with $\Otil(M\eps^{-2}\log c)$ time and  $O(M)$ space complexity, which inputs a scalar $0<\eps<1$ and a graph $\G=(\V,\E,w)$ where $c=\frac{w_{\max}}{w_{\min}}$ and returns an approximation $\tilde{\delta}$ for opinion disagreement $\delta$ of the graph $\G$, such that with high probability,
    {\small $(1-\eps)^3\delta\le\tilde{\delta}\le(1+\eps)^3\delta $}
    for the graph $\G$. 
\end{theorem}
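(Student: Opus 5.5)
The algorithm is assembled from the three ingredients already in place, and the three factors $(1\pm\eps)$ are tracked one per ingredient. The algorithm first calls \textsc{Sparsify}$(\G,\eps)$ from Lemma~\ref{eq:Ltil} to obtain $\LL(\tilde{\G})$ with $O(M\eps^{-2}\log N)$ nonzeros, and drops self-loops to get $\bar{\G}$ with $\LLbar=\LL(\tilde{\G})$. It then draws the random $\pm 1/\sqrt{k}$ matrix $\QQ$ with $k=\lceil 24\log N/\eps^2\rceil$ and forms $\QQ\bar{\WW}^{1/2}\bar{\BB}$ by sparse multiplication. For each of the $k$ rows it calls $\mathtt{LaplSolve}(\LLbar,\qq_i,\kappa)$ with $\kappa$ chosen as in \eqref{EE14}, producing $\ZZtil$. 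Finally it outputs
\[
\tilde{\delta}=d_{\rm sum}\sum_{i=1}^N\pi_i^2\,\|\ZZtil(\ee_i-\ppi)\|^2,
\]
computing $\ZZtil\ppi$ once and then each $\ZZtil\ee_i$ by reading a column.

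\emph{Sparsification step.} The starting point is the identity $\delta=d_{\rm sum}\sum_i\pi_i^2(\ee_i-\ppi)^\top\LL(\G')^\dagger(\ee_i-\ppi)$ derived from Lemma~\ref{lem:pinv}. Lemma~\ref{eq:Ltil} gives $(1-\eps)\LL(\G')\preceq\LL(\tilde{\G})\preceq(1+\eps)\LL(\G')$. Both Laplacians are connected and share the kernel $\mathrm{span}(\one)$, so the Loewner order inverts on $\one^\perp$:
\[
\tfrac{1}{1+\eps}\LL(\G')^\dagger\preceq\LL(\tilde{\G})^\dagger\preceq\tfrac{1}{1-\eps}\LL(\G')^\dagger .
\]
The vector $\ee_i-\ppi$ is not orthogonal to $\one$. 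However, every pseudoinverse annihilates $\one$, so only the projection of $\ee_i-\ppi$ onto $\one^\perp$ matters, and the inequality applies to that projection. This gives $C(i)\in[\tfrac{1}{1+\eps},\tfrac{1}{1-\eps}]\,(\ee_i-\ppi)^\top\LL(\G')^\dagger(\ee_i-\ppi)$. To convert this into a $(1\pm\eps)$ factor, one reparametrizes the sparsifier accuracy to $\eps/2$ (or similar), which keeps the claimed constant-factor form and only changes constants hidden in $\Otil$.

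\emph{Projection and solver steps.} The JL lemma, applied to the $O(N^2)$ difference vectors $\ee_i-\ee_j$ and $\ee_i-\ppi$ with a union bound, gives the two hypotheses of Lemma~\ref{lem:error1} with high probability. With the $\kappa$ of \eqref{EE14}, Lemma~\ref{lem:error1} yields $(1-\eps)^2C(i)\le\|\ZZtil(\ee_i-\ppi)\|^2\le(1+\eps)^2C(i)$. Multiplying by the sparsification factor, and summing with the nonnegative weights $d_{\rm sum}\pi_i^2$, gives $(1-\eps)^3\delta\le\tilde{\delta}\le(1+\eps)^3\delta$.

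\emph{Complexity.} Sparsification costs $O(M\eps^{-2}\log^2N)$. Each of the $k=O(\eps^{-2}\log N)$ solves costs $\Otil(\bar{M}\log(1/\kappa))$, where $\bar{M}=\Otil(M\eps^{-2})$. The final assembly costs $O(kN)$. For the log factor, \eqref{EE14} gives $\log(1/\kappa)=O(\log(N/\eps)+\log(w_{\max}/w_{\min}))$, which is the $\log c$ term. Here $c$ must refer to the weight ratio in $\bar{\G}$, or be bounded polynomially in terms of $\G$'s weights, which I would argue from the construction in \cite{ChChLiPeTe15b}.

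The \textbf{main obstacle} is the bookkeeping in this complexity bound. Multiplying $k$ by the per-solve cost naively gives $\eps^{-4}$ rather than the claimed $\eps^{-2}$. I would first try sparsifying with a constant accuracy and absorbing the extra factors into the tilde notation. If that fails, I would accept that the bound is stated loosely, with $\Otil$ hiding the extra $\eps$-dependence. Space is $O(M)$ up to the same factors: the sparsifier, the $k\times N$ matrix $\ZZtil$, and the solver's workspace.
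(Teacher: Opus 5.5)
Your argument is the one the paper intends. The paper gives no written proof of Theorem~\ref{th:alg}; the claim is meant to follow by composing Lemma~\ref{eq:Ltil} (one factor $1\pm\eps$), the JL projection, and Lemma~\ref{lem:error1} (the factor $(1\pm\eps)^2$), then summing with the nonnegative weights $d_{\rm sum}\pi_i^2$, exactly as you do.

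On accuracy you are more careful than the paper. It calls \textsc{Sparsify}$(\G,\eps)$ and never notes that inverting the Loewner bounds gives $1/(1-\eps)$ rather than $1+\eps$; running the sparsifier at $\eps/2$ settles this. Two small corrections:
\begin{itemize}
\item $\ee_i-\ppi$ \emph{is} orthogonal to $\one$, since $\one^\top\ppi=\sum_j\pi_j=1$. Your projection step is therefore vacuous, though harmless.
\item Lemma~\ref{lem:error1} assumes $\eps\le 1/2$. For $\eps\in(1/2,1)$, run with $\min\{\eps,1/2\}$.
\end{itemize}

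On running time your suspicion is justified, and the paper offers nothing that resolves it. With $\bar{M}=O(M\eps^{-2}\log N)$ and $k=O(\eps^{-2}\log N)$, both forming $\QQ\bar{\WW}^{1/2}\bar{\BB}$ and the $k$ solves cost $\Otil(M\eps^{-4}\log(1/\kappa))$. Storing $\bar{\G}$ and $\ZZtil$ takes $\Otil(M\eps^{-2})$ space, not $O(M)$.

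Your first proposed fix does not work. A constant-accuracy sparsifier leaves a constant-factor error in every $C(i)$, so the $(1\pm\eps)^3$ guarantee is lost. Falling back on $\Otil$ does not help either: it hides only polylogarithmic factors, not powers of $1/\eps$.

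What you can honestly prove for Algorithm~1 is the stated accuracy with $\Otil(M\eps^{-4}\log c)$ time, where $c$ is the weight ratio of $\bar{\G}$ (that is how the paper defines $w_{\max}$ and $w_{\min}$). Your bound $\log(1/\kappa)=O(\log(Nc/\eps))$ is correct. It quietly uses that $\bar{\G}$ is loopless: then $\bar{d}_{\rm sum}-\bar{d}_i\ge\bar{d}_i$, so the degree factor in \eqref{EE14} is at least $1/2$.

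A genuine $\eps^{-2}$ bound needs a different algorithm, for instance one that skips sparsification entirely:
\begin{itemize}
\item $\LL(\G')=\DD-\AA\DD^{-1}\AA$ is the Schur complement, onto one copy of $\V$, of the Laplacian of the bipartite double cover of $\G$. That graph has $2M$ edges and is connected because $\G$ is non-bipartite.
\item Since $\one^\top(\ee_i-\ppi)=0$, the form $(\ee_i-\ppi)^\top\LL(\G')^\dagger(\ee_i-\ppi)$ equals the double-cover form evaluated at $\ee_i-\ppi$ padded with zeros.
\item JL plus $k$ solves on the double cover then costs $\Otil(M\eps^{-2}\log(1/\kappa))$, after re-deriving the solver tolerance.
\end{itemize}
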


\begin{algorithm}
    \caption{\textsc{ApproxDelta}$(\calG, \epsilon)$}
    \label{alg:approx_delta}
    \Input{
        $\G(\V, \E, w) $: a connected undirected weighted graph \\
        $\epsilon$: an approximation parameter \\
    }
    \Output{
        $\tilde{\delta}$: approximation of opinion disagreement $\delta$ in $\G$
    }
    $\tilde{\G}=$ \textsc{Sparisify}$(\G,\eps)$\;
    $\bar{\G}$: graph obtained from $\tilde{\G}$ by removing self-loops \;
    $\bar{\LL}=$ Laplacian of $\bar{\G}$\;
    Construct a matrix $\QQ_{k \times \bar{M}}$, where $k=\lceil 24\log N/\epsilon^2 \rceil$ and each entry is $\pm 1/\sqrt{k}$ with identical probability\;
    \For{$i=1$ to $k$}{
        $\qq_i^\top $=the $i$-th row of $\QQ_{k \times \bar{M}} \bar{\WW}^{1/2} \bar{\BB}$ \\
        $\tilde{\zz}_i=\mathtt{LaplSolve}(\bar{\LL}, \qq_i, \kappa)$ where parameter $\kappa$ is given by~\eqref{EE14}\\
    }
    Calculate the constant vector ${\vp}=\ZZtil \ppi$\;
    \ForEach{$i \in \V$}{
    $\tilde{C}(i)= \|\ZZtil_{:,u}-{\vp}\|^2$
    }
    $\tilde{\delta}=d_{\rm sum}\,\sum_{i=1}^N \pi_i^2\cdot \tilde{C}(i)$\;
    \Return $\tilde{\delta}$
\end{algorithm}

\subsection{Algorithm Based on Sampling Random walks}
Although $\textsc{ApproxDelta}$ significantly reduces the running time of the accurate algorithm by directly computing the all eigenvalues and eigenvector according to~\eqref{eq:delta02}, we observe in our experiments (Section 7) that $\textsc{ApproxDelta}$ is still intolerable for large networks with ten million nodes.
We next propose an efficient and effective random-walk based sampling algorithm estimating opinion disagreement.

\subsubsection{Reformulation and Approximation for Key Quantity}
We first present an alternative expression for matrix $\calLL ^\dagger(\G')$.


\begin{lemma}\label{lem:NorL}
    The pseudoinverse $\calLL^\dagger(\G')$ of matrix $\calLL(\G')$ can be expressed as
        {\small $\calLL^\dagger(\G') = \sum_{i=0}^\infty\left(\mS^{2i}-\ppsi_1\ppsi_1^\top\right)$}.
\end{lemma}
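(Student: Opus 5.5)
The plan is to work entirely in the eigenbasis of $\SS$ and use the spectral decomposition of $\calLL^\dagger(\G')$ already recorded in Section~\ref{sec:formula}. Write $\SS=\sum_{k=1}^N\lambda_k\ppsi_k\ppsi_k^\top$. Then for every integer $i\ge 0$ we have $\SS^{2i}=\sum_{k=1}^N\lambda_k^{2i}\ppsi_k\ppsi_k^\top$. Since $\lambda_1=1$, subtracting $\ppsi_1\ppsi_1^\top$ removes exactly the $k=1$ term, so
\begin{equation*}
\SS^{2i}-\ppsi_1\ppsi_1^\top=\sum_{k=2}^N\lambda_k^{2i}\ppsi_k\ppsi_k^\top .
\end{equation*}
This also holds for $i=0$, where the left side is $\II-\ppsi_1\ppsi_1^\top$ by orthonormality of the $\ppsi_k$.

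Next, sum over $i$ and interchange the finite sum over $k$ with the series over $i$. For each $k\ge 2$ I need $\lambda_k^2<1$, and then the geometric series gives $\sum_{i=0}^\infty\lambda_k^{2i}=1/(1-\lambda_k^2)$. Assuming this, the partial sums converge, entrywise and in operator norm, to $\sum_{k=2}^N\frac{1}{1-\lambda_k^2}\ppsi_k\ppsi_k^\top$. That is exactly the expression for $\calLL^\dagger(\G')$ given earlier. For completeness, one can confirm this expression is the Moore--Penrose inverse of $\calLL(\G')=\II-\SS^2=\sum_{k\ge 2}(1-\lambda_k^2)\ppsi_k\ppsi_k^\top$. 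Both matrices are symmetric, are diagonal in the same basis, and invert each other on $\mathrm{span}\{\ppsi_2,\dots,\ppsi_N\}$, which is the range of $\calLL(\G')$ when $\G'$ is connected. The four Penrose conditions then follow immediately.

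The only genuine issue is the strict bound $\lambda_k^2<1$ for $k\ge2$, i.e.\ $\lambda=\max\{|\lambda_2|,|\lambda_N|\}<1$. I will justify it from standing assumptions. Connectivity of $\G$ makes the eigenvalue $1$ simple, so $\lambda_2<1$. Non-bipartiteness, assumed throughout the paper, excludes $-1$ from the spectrum of $\PP$ and hence of $\SS$, so $\lambda_N>-1$. Together these give $\lambda<1$. Then $\|\SS^{2i}-\ppsi_1\ppsi_1^\top\|_2=\lambda^{2i}$ decays geometrically, so the series converges absolutely and the interchange of summations is legitimate. Note that without non-bipartiteness the $i$-th term would not tend to zero and the identity would fail, so this is the step where the hypothesis is essential.
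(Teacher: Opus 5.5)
Your proposal is correct and follows essentially the same route as the paper: expand $\SS^{2i}$ spectrally, split off the $\ppsi_1\ppsi_1^\top$ term, write $1/(1-\lambda_k^2)$ as a geometric series, and swap the finite sum over $k$ with the series over $i$. Your explicit argument that $\lambda<1$ is left implicit in the paper, and it is what makes the geometric expansion and the interchange legitimate; it uses simplicity of the eigenvalue $1$ from connectivity and exclusion of $-1$ by non-bipartiteness.
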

\proof
For any positive integer $i$, matrix $\mS^{2i}$ can be written as $\mS^{2i}=\sum_{j=1}^N\lambda_j^{2i}\ppsi_j\ppsi_j^\top=\ppsi_1\ppsi_1^\top+\sum_{j=2}^{N}\lambda_j^{2i}\ppsi_j\ppsi_j^\top$. Then, one has
\begin{equation*}
    \begin{aligned}
        &\calLL^\dagger(\G')
        =  (\mI-\mS^2)^\dagger
        =\sum\nolimits_{j=2}^N\frac{1}{1-\lambda_j^2}\ppsi_j\ppsi_j^\top            \\
        = & \sum\nolimits_{j=2}^N\sum\nolimits_{i=0}^\infty\lambda^{2i}_j\ppsi_j\ppsi_j^\top
        =\sum\nolimits_{i=0}^\infty\sum\nolimits_{j=2}^N\lambda^{2i}_j\ppsi_j\ppsi_j^\top    \\
        = & \sum\nolimits_{i=0}^\infty\left(\mS^{2i}-\ppsi_1\ppsi_1^\top\right),
    \end{aligned}
\end{equation*}
which completes the proof.
\endproof

Since for any positive integer $i$, $\mS^{2i} = \mD^{1/2}\mP^{2i}\mD^{-1/2}$, the corresponding diagonal entries of matrices $\mS^{2i}$ and $\mP^{2i}$ are equal to each other. That is, $\mS^{2i}_{kk}=\mP^{2i}_{kk}$ for any positive integer $k$. By definition of matrix $\mP$, $\mP^{2i}_{kk}$ is the probability of a random walk starting from node $k$ and returning to $k$ at the $2i$-th step. Then, by Lemma~\ref{lem:NorL}, the $k$th diagonal entry of $\calLL^\dagger(\G')$ is
$\calLL^\dagger_{kk}(\G')= \sum_{j=0}^{\ell-1} (\mP_{kk}^{2j}-\ppi_k)+\sum_{j= \ell}^{\infty} (\mP_{kk}^{2j}-\ppi_k)$, for any $\ell >0$. Below we show that for an additive error $\epsilon>0$, we can choose $\ell$ appropriately such that any diagonal entry of $\calLL^\dagger(\G')$ can be approximated by the first sum within additive error $\epsilon/2$.

\begin{lemma}\label{lem:LdiagP}
    For any $\epsilon>0$, if $\ell\geq\frac{\log{(2/(\epsilon-\epsilon\lambda))}}{2\log{(1/\lambda)}}$, then for any $i\in \{1,2,\ldots,N\}$,
    {\small $\abs{\calLL^\dagger_{ii}(\G')-\sum_{j=0}^{\ell-1}\left(\mP_{ii}^{2j}-\ppi_i\right)}\leq\frac{\epsilon}{2}$}.
\end{lemma}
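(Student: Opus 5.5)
We bound the tail of the series in Lemma~\ref{lem:NorL} using the spectral decomposition of $\mS$. By Lemma~\ref{lem:NorL} and the identity $\mS^{2j}_{ii}=\mP^{2j}_{ii}$, together with $(\ppsi_1\ppsi_1^\top)_{ii}=\psi_{1i}^2=d_i/d_{\rm sum}=\pi_i$, we have
\begin{equation*}
\calLL^\dagger_{ii}(\G')-\sum_{j=0}^{\ell-1}\left(\mP_{ii}^{2j}-\pi_i\right)=\sum_{j=\ell}^{\infty}\left(\mS^{2j}_{ii}-\pi_i\right).
\end{equation*}
This uses the fact that the top eigenvector of $\mS$ is $\ppsi_1=\DD^{1/2}\one/\sqrt{d_{\rm sum}}$, since $\mS\DD^{1/2}\one=\DD^{-1/2}\AA\one=\DD^{1/2}\one$. (For $j=0$ the term is $1-\pi_i$, which is consistent with $\mS^0=\mP^0=\mI$.) So it suffices to bound the tail sum.

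Next, we expand each tail term spectrally. As in the proof of Lemma~\ref{lem:NorL},
\begin{equation*}
\mS^{2j}_{ii}-\pi_i=\sum_{k=2}^{N}\lambda_k^{2j}\psi_{ki}^2 .
\end{equation*}
This quantity is nonnegative, and since $\lambda_k^2\le\lambda^2$ for $k\ge 2$, it is at most $\lambda^{2j}\sum_{k=2}^N\psi_{ki}^2$. The eigenvectors form an orthonormal basis, so $\sum_{k=1}^N\psi_{ki}^2=1$, which gives $\sum_{k=2}^N\psi_{ki}^2=1-\pi_i\le 1$. Hence each tail term lies in $[0,\lambda^{2j}]$.

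Summing the geometric series (note $\lambda<1$ because $\G$ is connected and non-bipartite) gives
\begin{equation*}
0\le\sum_{j=\ell}^\infty\left(\mS^{2j}_{ii}-\pi_i\right)\le\frac{\lambda^{2\ell}}{1-\lambda^2}\le\frac{\lambda^{2\ell}}{1-\lambda},
\end{equation*}
where the last step uses $1-\lambda^2=(1-\lambda)(1+\lambda)\ge 1-\lambda$. It remains to show that $\lambda^{2\ell}/(1-\lambda)\le\epsilon/2$, i.e.\ $\lambda^{2\ell}\le\epsilon(1-\lambda)/2$. Taking logarithms, and using $\log(1/\lambda)>0$, this is equivalent to $2\ell\log(1/\lambda)\ge\log\bigl(2/(\epsilon-\epsilon\lambda)\bigr)$, which is exactly the hypothesis on $\ell$.

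There is no real obstacle in this argument. The only points needing care are two small verifications: that $\psi_{1i}^2=\pi_i$, so that the subtracted constant matches, and that $\lambda<1$, which is needed both for the series to converge and for the logarithm to be positive. The bound obtained is in fact slightly stronger than claimed, since it carries the extra factor $1/(1+\lambda)$.
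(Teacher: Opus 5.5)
Your proof is correct and follows the same route as the paper. Both arguments reduce via Lemma~\ref{lem:NorL} to the tail $\sum_{j\ge\ell}(\mS^{2j}_{ii}-\pi_i)=\sum_{j\ge\ell}\sum_{k\ge 2}\lambda_k^{2j}\psi_{ki}^2$, bound it by $\lambda^{2\ell}/(1-\lambda^2)\le\lambda^{2\ell}/(1-\lambda)$, and then invert the condition on $\ell$. Your explicit check that $\psi_{1i}^2=\pi_i$ is cleaner than the paper's write-up, which states $\alpha_1=\ppsi_1^\top\ve_i=\pi_i$ when it should be $\alpha_1=\sqrt{\pi_i}$; only $\alpha_1^2=\pi_i$ is actually used.
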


\proof
From Lemma~\ref{lem:NorL}, for any $i\in \{1,2,\ldots,N\}$, we have
\begin{equation*}
    \begin{aligned}
          & \abs{\calLL^\dagger_{ii}(\G')-\sum_{j=0}^{\ell-1}\left(\mP_{ii}^{2j}-\ppi_i\right)}                                                           \\
        = & \abs{\sum_{j=0}^\infty\left(\mS^{2j}_{ii}-\ppi_i\right)-\sum_{j=0}^{\ell-1}\left(\left(\mD^{-1/2}\mP^{2j}\mD^{1/2}\right)_{ii}-\ppi_i\right)} \\
        = & \sum_{j=\ell}^\infty\left(\mS^{2j}_{ii}-\ppi_i\right)=\ve_i^\top\sum_{j=\ell}^\infty\left(\mS^{2j}-\ppi_i\right)\ve_i.
    \end{aligned}
\end{equation*}
Write $\ve_i $ in terms of $\ppsi_1$, $\ppsi_2$, $\cdots$, $\ppsi_N$ as $\ve_i = \sum_{k=1}^N\alpha_k\ppsi_k$. It is easy to derive that $\alpha_1=\ppsi_1^\top\ve_i=\ppi_i$. Then $\ve_i^\top(\mS^{2j}-\ppi_i)\ve_i=\sum_{k=2}^N\alpha_k^2\lambda_k^{2j}$. Moreover, we have $\sum_{k=2}^N\alpha_k^2\leq\sum_{k=1}^N\alpha_k^2=\ve_i^\top\ve_i=1$. Thus,
\begin{equation*}
    \begin{aligned}
             & \ve_i^\top\sum\nolimits_{j=\ell}^\infty\left(\mS^{2j}-\ppi_i\right)\ve_i
        =\sum\nolimits_{j=\ell}^{\infty}\sum\nolimits_{k=2}^N\alpha_k^2\lambda_k^{2j}            \\
        \leq & \sum\nolimits_{j=\ell}^\infty\lambda^{2j}\sum\nolimits_{k=2}^N\alpha_k^2
        \leq \frac{\lambda^{2\ell}}{1-\lambda}\leq\frac{\epsilon}{2},
    \end{aligned}
\end{equation*}
where the last inequality follows from $\ell\geq\frac{\log{(2/(\epsilon-\epsilon\lambda))}}{2\log{(1/\lambda)}}.$
\endproof

\subsubsection{Random walks Sampling Algorithm}
Lemma~\ref{lem:LdiagP} shows that $\sum_{j=0}^{\ell-1}(\mP_{ii}^{2j}-\ppi_i)$ is a good approximation for the $i^{\mathrm{th}}$ diagonal element $\calLL^\dagger_{ii}(\G')$ of $\calLL^\dagger(\G')$. Since $\mP_{ii}^{2j}$ can be estimated by sampling random walks, we present a sampling algorithm for approximating $\delta$ by estimating each diagonal entry and then computing their weighted sum $\overline{\delta}=\sum_{i=1}^{N}\ppi_i\sum_{j=0}^{\ell-1}\left(\mP_{ii}^{2j}-\ppi_i\right)$. As shown in the following lemma, $\overline{\delta}$ has an error bound.

\begin{lemma}\label{delta}
    For any $\epsilon>0$ and $\ell\geq\frac{\log{(2/(\epsilon-\epsilon\lambda))}}{2\log{(1/\lambda)}}$, the relation $\abs{\delta-\overline{\delta}}\leq \epsilon/2$ holds.
\end{lemma}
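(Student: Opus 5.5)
The plan is to reduce the statement directly to Lemma~\ref{lem:LdiagP}, using the spectral expression~\eqref{eq:delta02} and the fact that $\ppi$ is a probability vector. By~\eqref{eq:delta02} we have $\delta=\sum_{i=1}^N\pi_i\calLL^\dagger_{ii}(\G')$, and by definition $\overline{\delta}=\sum_{i=1}^N\pi_i\sum_{j=0}^{\ell-1}(\mP^{2j}_{ii}-\pi_i)$. Subtracting term by term gives
\begin{equation*}
\delta-\overline{\delta}=\sum\nolimits_{i=1}^N\pi_i\Big(\calLL^\dagger_{ii}(\G')-\sum\nolimits_{j=0}^{\ell-1}\big(\mP^{2j}_{ii}-\pi_i\big)\Big).
\end{equation*}

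Next I take absolute values and apply the triangle inequality. Since each $\pi_i\ge 0$, this bounds $\abs{\delta-\overline{\delta}}$ by $\sum_i\pi_i$ times the individual errors. The hypothesis on $\ell$ is exactly that of Lemma~\ref{lem:LdiagP}, so each individual error is at most $\epsilon/2$, uniformly in $i$. Using $\sum_{i=1}^N\pi_i=1$ then yields $\abs{\delta-\overline{\delta}}\le\epsilon/2$.

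The only point requiring care is the consistency of the quantities being compared. I would check that the diagonal entries $\mS^{2j}_{ii}$ and $\mP^{2j}_{ii}$ coincide, which holds since $\mS^{2j}=\mD^{1/2}\mP^{2j}\mD^{-1/2}$; this is already used in Lemma~\ref{lem:LdiagP}. I would also check that the subtracted term $\pi_i$ matches the $\ppsi_1\ppsi_1^\top$ diagonal contribution. Here $(\ppsi_1)_i=\sqrt{d_i/d_{\rm sum}}$, so the diagonal of $\ppsi_1\ppsi_1^\top$ is $\pi_i$, consistent with Lemma~\ref{lem:NorL}. Both checks are inherited from the earlier lemmas, so the argument is essentially a weighted average of the uniform pointwise bound and presents no real obstacle.
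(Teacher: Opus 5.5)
Your proposal is correct and follows the paper's own argument: write $\delta-\overline{\delta}$ as a $\pi$-weighted sum of the per-node truncation errors, bound each error by $\epsilon/2$ using Lemma~\ref{lem:LdiagP}, and use $\sum_{i}\pi_i=1$. Your explicit use of the triangle inequality is slightly more careful than the paper, which writes that step as an equality.
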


\proof
From Lemma~\ref{lem:LdiagP}, we have
\begin{equation*}
    \begin{aligned}
        \left|\delta-\hat{\delta}\right|  = & \sum\nolimits_{i=1}^{N}\ppi_i\left|\calLL^\dagger_{ii}(\G')-\sum\nolimits_{j=0}^{\ell-1}\left(\mP_{ii}^{2j}-\ppi_i\right)\right| \\
        \leq                         & \frac{\epsilon}{2}\sum\nolimits_{i=1}^N\ppi_i=\frac{\epsilon}{2},
    \end{aligned}
\end{equation*}
which completes the proof.
\endproof

However, if we approximate the opinion disagreement by directly estimating $\sum_{i=1}^N\ppi_i\sum_{j=0}^{\ell-1}\left(\mP^{2j}_{ii}-\ppi_i\right)$, we need to sample
$\ell$-truncated random walks for each of the $N$ nodes, which is unacceptable for large networks. In
fact, we can evaluate the opinion disagreement by only summing a small
number of individuals. To achieve this goal, we introduce the
following lemmas.

\begin{lemma}(Hoeffding's inequality~\cite{Ho63})
    \label{lem:Hoeffding}
    Let $x_1, x_2, \ldots , x_r$ be $r$ independently random variables, and let $x = \sum_{i=1}^rx_i$. If
    $a_i \leq x_i \leq b_i$ holds for every $1 \leq i \leq r$, then for any $\alpha > 0$,
    \begin{equation*}
        \bbP\left(\abs{x-\bbE(x)}\geq\alpha\right)\leq 2e^{-\frac{2\alpha^2}{\sum_{i=1}^r\Delta_i^2}},
    \end{equation*}
    where $\Delta_i=b_i-a_i$.
\end{lemma}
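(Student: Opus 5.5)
The plan is the textbook Chernoff-style route: exponential moment plus independence plus optimisation over the exponent. Write $S=x-\bbE(x)=\sum_{i=1}^r (x_i-\bbE x_i)$ and fix $s>0$. By Markov's inequality applied to $e^{sS}$,
\[
\bbP(S\ge\alpha)\le e^{-s\alpha}\,\bbE\big(e^{sS}\big)=e^{-s\alpha}\prod_{i=1}^r\bbE\big(e^{s(x_i-\bbE x_i)}\big),
\]
where the factorisation uses the independence of $x_1,\ldots,x_r$. The two-sided statement then follows by applying the same argument to the variables $-x_i$, which lie in $[-b_i,-a_i]$ and have the same ranges $\Delta_i$, and taking a union bound. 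That union bound produces the factor $2$.

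The core step, and the one I expect to be the main obstacle, is Hoeffding's lemma. It says that a random variable $y$ with $\bbE y=0$ and $a\le y\le b$ satisfies $\bbE e^{sy}\le e^{s^2(b-a)^2/8}$. I would prove it as follows.

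First, use convexity of $t\mapsto e^{st}$ on $[a,b]$: since $y$ is a convex combination of $a$ and $b$,
\[
e^{sy}\le \frac{b-y}{b-a}e^{sa}+\frac{y-a}{b-a}e^{sb}.
\]
Taking expectations and using $\bbE y=0$ gives $\bbE e^{sy}\le e^{\varphi(u)}$, where
\[
u=s(b-a),\qquad p=-a/(b-a)\in[0,1],\qquad \varphi(u)=-pu+\log\big(1-p+pe^{u}\big).
\]
Next, check that $\varphi(0)=0$ and $\varphi'(0)=0$. The second derivative has the form $\varphi''(u)=q(1-q)$ with $q=pe^u/(1-p+pe^u)$, so $\varphi''(u)\le 1/4$. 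Taylor's theorem with remainder then gives $\varphi(u)\le u^2/8$. Getting this second-derivative bound cleanly is the delicate part; the rest is bookkeeping.

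Finally, apply the lemma to each centred variable $x_i-\bbE x_i$, whose range still has length $\Delta_i$. This gives
\[
\bbP(S\ge\alpha)\le \exp\Big(-s\alpha+\tfrac{s^2}{8}\sum_i\Delta_i^2\Big).
\]
Minimise the exponent over $s>0$: the optimum is at $s=4\alpha/\sum_i\Delta_i^2$, which yields the one-sided bound $e^{-2\alpha^2/\sum_i\Delta_i^2}$. Combining the upper and lower tails as described above gives the claimed $2e^{-2\alpha^2/\sum_i\Delta_i^2}$.
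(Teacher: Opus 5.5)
Your proposal is correct, and it matches the argument the paper relies on. The paper gives no proof of its own and simply cites Hoeffding (1963), whose proof is exactly this route: an exponential Markov bound with independence, the convexity bound giving $\mathbb{E}e^{sy}\le e^{\varphi(u)}$, the estimate $\varphi''\le 1/4$, and optimisation at $s=4\alpha/\sum_i\Delta_i^2$.

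The only detail worth adding is the degenerate case $a_i=b_i$. There your $p$ is undefined, but $x_i-\mathbb{E}x_i=0$ almost surely, so that factor of the product equals $1$ trivially.
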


\begin{lemma}\label{lem:sqrtn}
    Given $N$ positive numbers $x_1, x_2, \ldots , x_N$ such that $x_i\in [0,\Delta_i]$ with
    their sum $x = 	\sum_{i=1}^Nx_i$, an error $\beta > N^{-1/2}\log^{1/2}{N}$, let $\Delta=(\sum_{i=1}^N\Delta_i^2)^{1/2}$. If we select $t =
        O(\Delta N^{1/2}\log^{1/2}{N}/\beta)$ numbers, $x_{c_1}, x_{c_2},\ldots,x_{c_t}$ by Bernoulli trails with
    success probability $p = \Delta N^{-1/2}\log^{1/2}{(2N)}/\beta$ satisfying $0 < p < 1$, and define $\tilde{x} = \sum_{i=1}^tx_{c_i}/  (\Delta N^{-1/2}\log^{1/2}{(2N)}/\beta)$, then $\tilde{x}$ is an approximation of the sum $x$ of the original $N$ numbers, satisfying $\abs{x - \tilde{x}} \leq N^{1/2}\beta$.
\end{lemma}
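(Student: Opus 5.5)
The plan is to view $\tilde{x}$ as a sum of independent bounded random variables and apply Hoeffding's inequality (Lemma~\ref{lem:Hoeffding}). Write $p=\Delta N^{-1/2}\log^{1/2}(2N)/\beta$ and let $Y_i\in\{0,1\}$ be the independent Bernoulli indicators (success probability $p$) saying whether $x_i$ is selected. Then
\begin{equation*}
\tilde{x}=\sum\nolimits_{i=1}^N \frac{Y_i x_i}{p}.
\end{equation*}
Since $\bbE(Y_i x_i/p)=x_i$, we get $\bbE(\tilde{x})=x$, so $\tilde{x}$ is unbiased.

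Each summand satisfies $0\le Y_ix_i/p\le \Delta_i/p$, so in Lemma~\ref{lem:Hoeffding} the range widths are $\Delta_i/p$ and $\sum_i(\Delta_i/p)^2=\Delta^2/p^2$. Taking $\alpha=N^{1/2}\beta$ gives
\begin{equation*}
\bbP\left(\abs{\tilde{x}-x}\ge N^{1/2}\beta\right)\le 2\exp\left(-\frac{2N\beta^2p^2}{\Delta^2}\right).
\end{equation*}
Substituting $p$ gives $p^2N\beta^2/\Delta^2=\log(2N)$, so the bound equals $2e^{-2\log(2N)}=1/(2N^2)$. Hence $\abs{x-\tilde{x}}\le N^{1/2}\beta$ holds with high probability, and this is how I would interpret the statement: the guarantee is probabilistic, not deterministic.

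For the sample size, the number of selected terms $t=\sum_iY_i$ has expectation $Np=\Delta N^{1/2}\log^{1/2}(2N)/\beta$. A multiplicative Chernoff bound, or Hoeffding applied to the $Y_i$, shows $t=O(\Delta N^{1/2}\log^{1/2}N/\beta)$ with high probability. I expect the hypothesis $\beta>N^{-1/2}\log^{1/2}N$ and the condition $0<p<1$ are only there to keep the Bernoulli scheme well defined; a union bound over the two failure events finishes the proof.

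The main obstacle is not the concentration step but getting the statement's quantifiers right. The conclusion must be read as holding with high probability, and $t$ is random in the Bernoulli model, so the claimed order for $t$ can only hold in expectation or with high probability. I would state both explicitly.
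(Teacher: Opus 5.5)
Your proof is correct and follows the paper's argument: the paper also introduces Bernoulli indicators $y_i$ and applies Hoeffding's inequality to the unscaled sum $z=\sum_i x_i y_i$ with deviation $pN^{1/2}\beta$, which is equivalent to your rescaled version, reaching the same exponent $2\log(2N)$; it then bounds the failure probability loosely by $1/N$ instead of your $1/(2N^2)$. Your remarks that the conclusion holds only with high probability and that $t$ is random are accurate additions, since the paper records only $\mathbb{E}(y)=Np$ for the sample size. One caution: your Chernoff upgrade to a high-probability bound $t=O(Np)$ needs $Np\gtrsim\log N$, so the expectation statement is the safe one in general.
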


\proof
For each $i\in\{1,2,\ldots,N\}$, let $y_i$ be $N$ Bernoulli random
variables such that $\bbP(y_i = 1) = p$ and $\bbP(y_i = 0) = 1-p$, where
$y_i = 1$ indicates that the $x_i$ is selected, and $y_i = 0$ otherwise.
Let $z_i = x_iy_i$ be $N$ independent random variables satisfying
$z_i\in [0, x_i]\subseteq[0, \Delta_i]$. Denote $y$ as the sum of the $N$ random variables
$y_i$, and denote $a$ as the sum of the $N$ random variables $z_i$. Namely, $y=\sum_{i=1}^Ny_i$ and $z=\sum_{i=1}^Nz_i$. By definition, $y$ and $z$ represent, respectively, the
number of selected numbers and their sum. Then the expectation
of $y$ is $\bbE(y) = Np$, and the expectation of $z$ is $\bbE(z) = px$. According
to Hoeffding's inequality in Lemma~\ref{lem:Hoeffding}, we obtain
\begin{equation*}
    \begin{aligned}
        \bbP(\abs{\tilde{x} - x} \geq N^{1/2}\beta)
        = & \bbP(\abs{z - px}/p \geq N^{1/2}\beta) \\ \leq& 2e^{-\frac{2p^2N\beta^2}{\sum_i\Delta_i^2}} \leq
        \frac{1}{N},
    \end{aligned}
\end{equation*}
finishing the proof.
\endproof

\begin{algorithm}
    \caption{\textsc{SampleDelta}$(\calG, \epsilon)$}
    \label{alg:sample_delta}
    \Input{
        $\G(\V, \E, w) $: a connected undirected weighted graph \\
        $\epsilon$: an approximation parameter \\
    }
    \Output{
        $\hat{\delta}$: approximation of opinion disagreement $\delta$ in $\G$
    }
    $\ell\gets\frac{\log{(2/(\epsilon-\epsilon\lambda))}}{2\log{(1/\lambda)}}$, $r\gets\frac{2\ell^2\log{(2N^2\ell)}}{\epsilon^2}$\;
    Sample a node set $\cX\subset\cV$ satisfying $\abs{\cX} = \lceil \frac{N^{1/2}\log^{1/2}{N}}{(1-\lambda)\epsilon}\rceil$ \;
    \For {each node $i\in\cX$}{
        \For{$j=1$ to $\ell$}{
            \For{$k=1$ to $r$}{
                $u=i$\;
                \For{$m=1$ to $2j$}{
                    \If{$u=i$}{
                        $\hat{\delta}^{(j)}_i\gets \hat{\delta}^{(j)}_i+1 $\;
                    }
                    $u\gets$ a randomly selected neighbor of $u$\;              
                }
            }
        }
    }
    \Return $\hat{\delta} = \frac{N}{\abs{\cX}}\sum_{i\in\cX}\ppi_i\sum_{j=0}^{\ell-1}\left(\hat{\delta}^{(j)}_i/r-\ppi_i\right)$
\end{algorithm}

Lemma~\ref{lem:sqrtn} reveals that computing $\overline{\delta}$ does not require estimating all $N$ diagonal elements of $\calLL^\dagger(\G')$. Instead, only a small subset of randomly selected elements of $\calLL^\dagger(\G')$ needs to be estimated. Based on this result, we present a fast sampling algorithm \textsc{SampleDelta} to approximate opinion disagreement $\delta$, the pseudocode of which is provided in Algorithm~2. The performance of Algorithm~2 is summarized in Theorem~\ref{thm:alg2error}.

\begin{theorem}\label{thm:alg2error}
    For any error parameter $\epsilon >0$, chose $\ell \geq \frac{\log{(2/(\epsilon-\epsilon\lambda))}}{2\log{(1/\lambda)}}$. If $r\geq \frac{2\ell^2\log{(2N)}}{\epsilon^2}$, then Algorithm~2 outputs $\hat{\delta}$ as an approximation of opinion disagreement $\delta$ satisfying $|\delta-\hat{\delta}| \leq(\sqrt{N}+1)\epsilon$ with probability $\left(1-\frac{1}{N}\right)^2$. And the running time of Algorithm~2 is $O(\ell^4\sqrt{N}\log^{3/2}{N}/\epsilon^2)$.
\end{theorem}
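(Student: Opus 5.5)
The plan is to split the error $|\delta-\hat{\delta}|$ through two intermediate quantities and apply the triangle inequality:
\begin{equation*}
|\delta-\hat{\delta}|\le|\delta-\overline{\delta}|+|\overline{\delta}-\check{\delta}|+|\check{\delta}-\hat{\delta}|.
\end{equation*}
Here $\overline{\delta}=\sum_{i=1}^N\ppi_i\sum_{j=0}^{\ell-1}(\mP^{2j}_{ii}-\ppi_i)$ is the truncated quantity. The quantity $\check{\delta}$ is the node-sampled version of $\overline{\delta}$, using the exact return probabilities but only the nodes of $\cX$, rescaled as in Algorithm~2. The output $\hat{\delta}$ replaces each exact $\mP^{2j}_{ii}$ by its empirical frequency $\hat{\delta}^{(j)}_i/r$.

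The first term is at most $\epsilon/2$ directly by Lemma~\ref{delta}, given the choice of $\ell$.

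For the third term, fix a sampled node $i$ and a step $j$. The $r$ indicator variables recording whether a walk from $i$ is at $i$ at time $2j$ are i.i.d., lie in $[0,1]$, and have mean $\mP^{2j}_{ii}$. By Lemma~\ref{lem:Hoeffding} with $\alpha=r\epsilon/\ell$, we get $|\hat{\delta}^{(j)}_i/r-\mP^{2j}_{ii}|\le\epsilon/\ell$ except with probability $2e^{-2r\epsilon^2/\ell^2}$. With $r\ge 2\ell^2\log(2N)/\epsilon^2$, a union bound over the at most $N\ell$ pairs $(i,j)$ (using the algorithm's $\log(2N^2\ell)$ factor) gives overall failure probability at most $1/N$. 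On this event, each inner sum over $j$ is off by at most $\epsilon$. Multiplying by the weights $\ppi_i$ and by the rescaling $N/|\cX|$, the third term is at most $\epsilon\,\frac{N}{|\cX|}\sum_{i\in\cX}\ppi_i$. Controlling this requires a bound on the sampled mass. I would argue it is $O(\epsilon)$ up to constants, or absorb it into the $\sqrt N\epsilon$ slack. This is the part where I expect the constants to be delicate.

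For the second term, apply Lemma~\ref{lem:sqrtn} to the $N$ numbers $x_i=\ppi_i\sum_{j<\ell}(\mP^{2j}_{ii}-\ppi_i)$. These are nonnegative because $\mS^{2j}_{ii}-\ppi_i=\sum_{k\ge2}\alpha_k^2\lambda_k^{2j}\ge0$. They are bounded by $\Delta_i=\ppi_i/(1-\lambda)$, using the geometric-series bound from Lemma~\ref{lem:LdiagP}. Hence $\Delta\le(\sum_i\ppi_i^2)^{1/2}/(1-\lambda)\le 1/(1-\lambda)$. Choosing $\beta=\epsilon$, the lemma needs $p\approx N^{-1/2}\log^{1/2}N/((1-\lambda)\epsilon)$, which matches the sample size $|\cX|$ in Algorithm~2. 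It then gives $|\overline{\delta}-\check{\delta}|\le\sqrt N\epsilon$ with probability $1-1/N$.

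Combining the three terms gives $(\sqrt N+1)\epsilon$. The node-sampling event and the walk-sampling event are independent, so both hold with probability $(1-1/N)^2$.

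For the running time, each sampled node performs $\ell r$ walks, each of length at most $2\ell$, so the cost is $O(|\cX|\,\ell^2 r)=O(|\cX|\,\ell^4\log N/\epsilon^2)$. Substituting $|\cX|=O(\sqrt N\log^{1/2}N)$, treating $\epsilon$ and $1-\lambda$ as absorbed as in the stated bound, yields $O(\ell^4\sqrt N\log^{3/2}N/\epsilon^2)$.

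The main obstacle is the careful matching of the sampling probability, and the rescaling factor $N/|\cX|$ versus $1/p$, to Lemma~\ref{lem:sqrtn}. The interaction of this rescaling with the per-node walk error must not inflate that error beyond the $\sqrt N\epsilon$ budget.
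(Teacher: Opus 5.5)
Your skeleton matches the paper's: truncation by Lemma~\ref{delta}, Hoeffding for the return probabilities, Lemma~\ref{lem:sqrtn} for node sampling, and the same runtime count. However, the step you flag as delicate is a genuine gap, and your choice of intermediate $\check{\delta}$ creates it. Because $\check{\delta}$ is node-sampled with exact return probabilities, the walk error enters only after the rescaling, as $\epsilon\,\frac{N}{|\mathcal{X}|}\sum_{i\in\mathcal{X}}\pi_i$. This sampled mass equals $1$ only in expectation and is not bounded by a constant. It is of order $N/|\mathcal{X}|$ whenever $\mathcal{X}$ contains a hub carrying constant stationary mass. So ``$O(\epsilon)$'' is unjustified. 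Nor can it be absorbed into the $\sqrt{N}\epsilon$ slack, because your second term already uses all of it. Your constants also do not add up. Accuracy $\epsilon/\ell$ per term gives per-node error $\epsilon$, so even with mass $1$ you get $\epsilon/2+\sqrt{N}\epsilon+\epsilon=(\sqrt{N}+\tfrac{3}{2})\epsilon$. You need $\epsilon/(2\ell)$ per term, as in the paper.

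The fix is to charge the walk error to the full $\pi$-weighted sum; this is what the paper's union bound over all $N\ell$ pairs implicitly does. Imagine walks run from every node, independently of the node sample. The algorithm only reads those from $\mathcal{X}$, so $\hat{\delta}$ has the same law. Set $\hat{x}_i=\pi_i\sum_{j<\ell}(\hat{\delta}^{(j)}_i/r-\pi_i)$ and $\tilde{\delta}=\sum_{i=1}^N\hat{x}_i$. On the walk event, $|\overline{\delta}-\tilde{\delta}|\le\frac{\epsilon}{2}\sum_{i=1}^N\pi_i=\frac{\epsilon}{2}$. Conditioned on the walks, the $\hat{x}_i$ are fixed numbers with $|\hat{x}_i|\le\pi_i(\frac{1}{1-\lambda}+\frac{\epsilon}{2})$. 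The Hoeffding argument behind Lemma~\ref{lem:sqrtn} then applies: $\hat{x}_iy_i$ ranges over an interval of length $|\hat{x}_i|$, so nonnegativity is not needed. It gives $|\tilde{\delta}-\hat{\delta}|\le\sqrt{N}\epsilon$, up to the harmless replacement of $\frac{1}{1-\lambda}$ by $\frac{1}{1-\lambda}+\frac{\epsilon}{2}$ in $\Delta$. Independence of the two sources of randomness then yields probability $(1-1/N)^2$.

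With deviation $\epsilon/(2\ell)$ per term, the union bound over $N\ell$ terms needs the algorithm's $r=2\ell^2\log(2N^2\ell)/\epsilon^2$, which gives per-term failure $1/(N^2\ell)$. The theorem's $r\ge 2\ell^2\log(2N)/\epsilon^2$ is not enough for that. Alternatively, keep the weaker $r$ and apply Hoeffding per node to all $r\ell$ indicators at once. Finally, the mismatch you mention between $1/p$ and $N/|\mathcal{X}|$ (Bernoulli versus fixed-size sampling) is not resolved by the paper either, which simply identifies $|\mathcal{X}|$ with $Np$.
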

\begin{envblue}
    \begin{proof}
We outline the proof in five phases: truncation error control, random walk estimation, node sampling, error composition, and runtime analysis.

        \textit{Truncation error control.}
        The exact opinion disagreement $\delta$ involves an infinite series $\sum_{j=0}^\infty (\mP_{ii}^{2j} - \pi_i)$. By Lemma~\ref{lem:LdiagP}, truncating this series at $\ell$ terms introduces an additive error $\leq \epsilon/2$ when $\ell \geq \frac{\log(2/(\epsilon(1-\lambda)))}{2\log(1/\lambda)}$. This follows from the geometric decay of the series, governed by the spectral gap $1-\lambda$. 

        \textit{Random walk estimation.}
        For each node $i$ and walk length $2j \leq 2\ell$, Algorithm~2 estimates $\mP_{ii}^{2j}$ by performing $r$ random walks. By Hoeffding's inequality (Lemma~\ref{lem:Hoeffding}), setting $r \geq \frac{2\ell^2\log(2N)}{\epsilon^2}$ ensures that the deviation between the estimated and true return probabilities satisfies $|\hat{\delta}_i^{(j)}/r - \mP_{ii}^{2j}| \leq \epsilon/(2\ell)$ with probability $\geq 1-1/N$. Union bounding over all $N\ell$ terms guarantees simultaneous accuracy for all estimation with probability $\geq 1-1/N$.

        \textit{Node sampling.}
        Instead of processing all $N$ nodes, Algorithm~2 uniformly samples $O(\sqrt{N}\log^{1/2}N)$ nodes. By Lemma~\ref{lem:sqrtn}, this sampling approximates the weighted sum $\sum_{i=1}^N \pi_i(\cdot)$ with an error less than $\sqrt{N}\epsilon$, as the variance of node contributions is bounded by $\Delta = 1/(1-\lambda)$. This step exploits the scale-free property where high-degree nodes dominate the weighted sum.

        \textit{Error composition.}
        The total error combines truncation and sampling errors:
        \begin{equation*}
            |\delta - \hat{\delta}| \leq |\delta - \overline{\delta}| + |\overline{\delta} - \hat{\delta}| \leq (\sqrt{N}+1)\epsilon.
        \end{equation*}
        Both error terms hold simultaneously with probability $\geq (1-1/N)^2$ by union bound.

        \textit{Runtime analysis.}
        For each of $O(\sqrt{N}\log^{1/2}N)$ sampled nodes, the algorithm performs $r = O(\ell^2\log N/\epsilon^2)$ random walks with maximum length $2\ell$. The total time complexity becomes:
        \begin{equation*}
            O\left(\sqrt{N}\log^{1/2}N \cdot \frac{\ell^2\log N}{\epsilon^2} \cdot \ell^2\right) = O\left(\frac{\ell^4\sqrt{N}\log^{3/2}N}{\epsilon^2}\right).
        \end{equation*}

        This completes the proof that Algorithm~2 achieves sublinear runtime while preserving the error guarantee.
    \end{proof}
\end{envblue}

 Theorem~\ref{thm:alg2error} indicates that Algorithm~2 approximates opinion disagreement $\delta$ with bounded error $(\sqrt{N}+1)\epsilon$ and sublinear runtime.

\section{Experimental Results}

In this section, we evaluate the performance of our proposed algorithms in terms of the approximation solution quality and running time. To this end, we compare our algorithms \textsc{SampleDelta} and \textsc{ApproxDelta} against the exact algorithm given in~\eqref{eq:delta02} as well as other baselines, by performing extensive experiments on real and model networks.

\subsection{Experimental Settings}

\begin{envblue}
    \textit{Datasets.}
    Our analysis employs graph datasets sourced from two established repositories: the Koblenz Network Collection~\cite{Ku13} and Network Repository~\cite{RoAh15}. For networks exhibiting multiple disconnected components, we perform our experiments on their largest connected components (LCCs). Key statistics of these processed networks appear in Tables~\ref{tab:real_disagree} and~\ref{tab:large_disagree}, organized by ascending network size. 
\end{envblue}

\begin{envblue}
    \textit{Environment.}
    We executed all numerical tests on a Linux server equipped with a 72-core 2.1GHz CPU and 256GB of RAM. Our implementation leverages the Julia programming environment for efficient matrix computations, particularly benefiting from the Laplacians.jl package's optimized solvers~\cite{KySa16}. The source code is publicly accessible at \url{https://github.com/vivian1tsui/opinion_disgreement}.
\end{envblue}

\begin{envblue}
   \textit{Baselines and Parameters.}
    To establish performance benchmarks, we evaluate our algorithms against conventional approximation techniques. Drawing from the theoretical formulation in~\eqref{eq:delta}, which expresses opinion disagreement through node-pair hitting times, we implement \textsc{SimulateMC}. \textsc{SimulateMC} estimates opinion disagreement by approximating pairwise node interactions through repeated random walk simulations.
\end{envblue}
For our algorithm \textsc{SampleDelta}, we also need to determine the truncated length $\ell$, which depends on $\lambda$ and $\epsilon$. In our experiments, we set $\lambda=0.9998$, which is very conservative since it is significantly larger than the actual values for most real networks~\cite{MoYuKi10}, including those with more than twenty million nodes.

\begin{table*}
    \centering
    \tabcolsep=2pt
    \fontsize{7.5}{8.3}\selectfont
    \caption{The running time (seconds) and estimated opinion disagreement of \textsc{ApproxDelta} and \textsc{SampleDelta} with various \(\epsilon\) on large connected real networks. For each network, we indicate the number of nodes as $N$ and the number of edges as $M$.}
    \label{tab:large_disagree}
    \begin{tabular}{@{}crrcccccccccccc@{}}
        \toprule
        \multirow{3}{*}{Network}
                            &
        \multirow{3}{*}{$N$}
                            &
        \multirow{3}{*}{$M$}
                            &
        \multicolumn{6}{c}{Running time (seconds)}
                            &
        \multicolumn{6}{c}{Estimated opinion disagreement}
        \\
        \cmidrule(l){4-15}
                            &
                            &
                            &
        \multicolumn{3}{c}{\textsc{ApproxDelta}}
                            &
        \multicolumn{3}{c}{\textsc{SampleDelta}}
                            &
        \multicolumn{3}{c}{\textsc{ApproxDelta}}
                            &
        \multicolumn{3}{c}{\textsc{SampleDelta}}
        \\
        \cmidrule(l){4-15}
                            &
                            &
                            &
        $\epsilon=0.35$
                            &
        $\epsilon=0.3$
                            &
        $\epsilon=0.25$
                            &
        $\epsilon=0.35$
                            &
        $\epsilon=0.3$
                            &
        $\epsilon=0.25$
                            &
        $\epsilon=0.35$
                            &
        $\epsilon=0.3$
                            &
        $\epsilon=0.25$
                            &
        $\epsilon=0.35$
                            &
        $\epsilon=0.3$
                            &
        $\epsilon=0.25$
        \\
        \midrule
        soc-twitter-follows & 404,719    & 713,319    & 261.9 & 362.8 & 605.2 & {60.21} &{99.40} & {182.5} & 2.144 & 2.304 & 2.388 & 2.387 & 2.388 & 2.383 \\
        soc-delicious       & 536,108    & 1,365,961  & 293.3 & 412.0 & 610.8 & {232.4} & {77.47} & {130.0} & 1.675 & 1.670 & 1.705 & 1.684 & 1.682 & 1.681 \\
        soc-youtube-snap    & 1,134,890  & 2,987,624  & 417.6 & 613.9 & 896.6  & {184.6} & {62.25} & {101.5} & 1.600 & 1.482 & 1.605 & 1.590 & 1.581 & 1.574 \\
        soc-hyves           & 1,402,673  & 2,777,419  & 370.5 & 514.8 & 801.6 & {245.4} & {78.71} & {131.5} & 1.555 & 1.581 & 1.598 & 1.679 & 1.671 & 1.665 \\
        delaunay-n24        & 16,777,216 & 50,331,601 & 232.6 & 281.3 & 375.0 & {9.779} & {15.64} & {28.22} & 1.703 & 1.698 & 1.699 & 1.696 & 1.699 & 1.699 \\
        road-usa            & 23,947,347 & 28,854,312 & 3470  & 4708  & --    & 30.53 & 46.60 & 81.73 & 4.932 & 4.843 & --    & 4.895 & 4.891 & 4.902 \\
        socfb-konect        & 58,790,782 & 92,208,195 & --    & --    & --    & 1638  & 2354  & 4338  & --    & --    & --    & 2.237 & 2.256 & 2.254 \\
        \bottomrule
    \end{tabular}
\end{table*}

\subsection{Results on Real-World Networks}

\begin{envblue}
    We evaluate the computational efficiency and scalability of our algorithms against exact computation and the baseline \textsc{SimulateMC} under various network sizes and error parameters \(\epsilon\in \left\{0.35,0.3,0.25\right\}\). Tables~\ref{tab:real_disagree} and~\ref{tab:large_disagree} demonstrate our algorithms' consistent speed superiority across all tested networks.
\end{envblue}
Note that one can exactly compute opinion disagreement only for small networks with less than 60,000 nodes, due to its $O(N^3)$ time complexity. Meanwhile, the time complexity of the baseline \textsc{SimulateMC} is \(O(N^2)\), as it needs to estimate hitting times of all node pairs in the network. For networks with more than 100,000 nodes in Table~\ref{tab:large_disagree}, the exact algorithm and \textsc{SimulateMC} runs out of time. For the network socfb-konect with more than 58 million nodes, \textsc{ApproxDelta} fails to complete within a day, whereas \textsc{SampleDelta} finishes in just over an hour, highlighting its superior efficiency and scalability.


Although \textsc{SampleDelta} is significantly faster than \textsc{ApproxDelta}, both algorithms maintain comparable high accuracy on real-world networks. Table~\ref{tab:real_disagree} compares the mean relative errors of our algorithms against the baseline \textsc{SimulateMC}. Results show that our algorithms yield negligible errors, which are orders of magnitude lower than the theoretical guarantees.  Moreover, the two relative errors are nearly identical. Thus, in addition to high efficiency, \textsc{SampleDelta} provides accurate approximations for real-world networks.

It is worth mentioning that for those large networks in Table~\ref{tab:large_disagree} their approximate opinion disagreement is also small,  most of which is close to $1$. This further verifies our claim that the opinion disagreement in scale-free networks does not increase with the network size $N$.

\begin{table*}
    \tabcolsep=4pt
    \caption{The running time (seconds) and mean relative error of \textsc{ApproxDelta} and \textsc{SampleDelta} with various \(\epsilon\) for estimating the Kemeny constant on model networks $\mathcal{F}'_g$.}
    \label{tab:Kemeny}
    \begin{tabular}{@{}cccccccccccccc@{}}
        \toprule
        \multirow{3}{*}{Network}
                              &
        \multirow{3}{*}{$K(\mathcal{F}'_g)$}
                              &
        \multicolumn{6}{c}{Running time (seconds)}
                              &
        \multicolumn{6}{c}{Mean relative error}
        \\
        \cmidrule(l){3-14}
                              &
                              &
        \multicolumn{3}{c}{\textsc{ApproxDelta}}
                              &
        \multicolumn{3}{c}{\textsc{SampleDelta}}
                              &
        \multicolumn{3}{c}{\textsc{ApproxDelta}}
                              &
        \multicolumn{3}{c}{\textsc{SampleDelta}}
        \\
        \cmidrule(l){3-14}
                              &
                              & $\epsilon=0.35$
                              & $\epsilon=0.3$
                              & $\epsilon=0.25$
                              & $\epsilon=0.35$
                              & $\epsilon=0.3$
                              & $\epsilon=0.25$
                              & $\epsilon=0.35$
                              & $\epsilon=0.3$
                              & $\epsilon=0.25$
                              & $\epsilon=0.35$
                              & $\epsilon=0.3$
                              & $\epsilon=0.25$                                                                                                        \\
        \midrule
        \(\mathcal{F}'_{12}\) & \(1.15 \times 10^{6}\) & 186.6 & 244.2 & 379.3 & {65.76} & {108.2} & {blue}{196.0} & 0.017 & 0.013 & 0.009 & 0.002 & 0.002 & 0.000 \\
        \(\mathcal{F}'_{13}\) & \(3.45 \times 10^{6}\) & 425.7 & 590.2 & 836.9 & 114.0 & 174.1 & 328.8 & 0.026 & 0.019 & 0.013 & 0.002 & 0.001 & 0.001 \\
        \(\mathcal{F}'_{14}\) & \(1.04 \times 10^{7}\) & 1454  & 2064  & 2895  & 276.4 & 459.3 & 761.3 & 0.026 & 0.019 & 0.013 & 0.001 & 0.001 & 0.001 \\
        \(\mathcal{F}'_{15}\) & \(3.11 \times 10^{7}\) & --    & --    & --    & {527.6} & {913.0} & {1610} & --    & --    & --    & 0.002 & 0.001 & 0.001 \\
        \bottomrule
    \end{tabular}
\end{table*}

\subsection{Results on a Model Network}

As shown in~\eqref{eq:delta02}, opinion disagreement $\delta$ of a graph $\G$ is dependent on the partial mean hitting time $H_i(\G')$ on the corresponding  graph  $\G'$. It has been proved~\cite{ZhXuZh20} that for a graph $\G$, the partial mean hitting time $H_i(\G)$ is closely related to its Kemeny constant $K(\G)$~\cite{KeSn76}, defined as the expected time for a walker starting from a node $i$ to another node $j$ chosen randomly from  set $\V$ according to the stationary distribution $\ppi$. Specifically,
$K(\G)=\sum_{j=1}^{N}\pi_jH_j(\G) = \sum_{k=2}^{N}\frac{1}{1-\lambda_{k}}$.
Thus, we have
$K(\G')=\sum_{j=1}^{N}\pi_jH_j(\G') = \sum_{k=2}^{N}\frac{1}{1-\lambda^2_{k}}$,
which can be approximated by \textsc{ApproxDelta} and \textsc{SampleDelta}.

We further validate our algorithms by comparing approximated Kemeny constants against exact results on a model network.
\begin{envblue}
    The error parameter of both our algorithms is also set to be \(\epsilon\in \left\{0.35,0.3,0.25\right\}\).
\end{envblue}
The model network is called the pseudofractal scale-free web (PSFW) constructed in an iterative way~\cite{XiZhCo16}. Let $\mathcal{F}_g$ ($g \geq 0$) be the PSFW after $g$ iterations. For $g=0$, $ \mathcal{F}_0$ consists of three nodes and three edges, which forms a triangle. For $g>0$, $\mathcal{F}_g$ is obtained from $\mathcal{F}_{g-1}$ by performing the following operation for each edge $e$ in $\mathcal{F}_{g-1}$: create a new node and link it to both end nodes of the edge $e$. The construction process of the PSFW is illustrated in Figure~\ref{psfw1}.
In graph $\mathcal{F}_g$, there are $(3^{g+1}+3)/2$ nodes and $3^{g+1}$ edges. Let $\PP_g$ denote the transition matrix for graph $\mathcal{F}_{g}$, the eigenvalues of which have been explicitly determined~\cite{XiZhCo16}.
\begin{lemma}\label{lem:Fg}
    For $g\ge 2$, the set of eigenvalues of the transition matrix $\PP_g$ for $\mathcal{F}_{g}$ consists of $1$ with multiplicity one,
    $1-\frac{3}{2^{s+1}}$ with multiplicity $\frac{3^{g-s}+3}{2}$ for $s=0,1,\ldots,g$,
    and
    $1-\frac{4}{2^{s+2}}$ with multiplicity $\frac{3^{g-s}-3}{2}$ for $s=0,1,\ldots,g-2$.
\end{lemma}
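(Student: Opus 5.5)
The plan is an induction on $g$ via a decimation argument. I would relate the spectrum of $\PP_g$ to that of $\PP_{g-1}$ by eliminating the nodes created at generation $g$. These nodes all have degree $2$, and there are $3^g$ of them, one per edge of $\mathcal{F}_{g-1}$. Every old node $i$ has its degree doubled: $d_i^{(g)}=2d_i^{(g-1)}$. Half of its neighbours in $\mathcal{F}_g$ are its old neighbours, and the other half are the new nodes attached to its old edges.

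\emph{Step 1 (eigenvalue map).} Let $\PP_g\xx=\mu\xx$. For a new node $v$ attached to the edge $\{a,b\}$ we have $\mu x_v=(x_a+x_b)/2$, so $x_v=(x_a+x_b)/(2\mu)$ when $\mu\neq 0$. Substituting into the equation at an old node $i$ gives
\begin{equation*}
\mu x_i=\frac{1}{2d_i^{(g-1)}}\sum_{j\sim i}\Big(x_j+\frac{x_i+x_j}{2\mu}\Big),
\end{equation*}
where $j$ ranges over the neighbours of $i$ in $\mathcal{F}_{g-1}$. Rearranging yields $\frac{4\mu^2-1}{4\mu}x_i=\frac{2\mu+1}{4\mu}(\PP_{g-1}\xx_{\rm old})_i$. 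Hence, for $\mu\notin\{0,-1/2\}$,
\begin{equation*}
\PP_{g-1}\xx_{\rm old}=(2\mu-1)\,\xx_{\rm old}.
\end{equation*}
The restriction $\xx\mapsto\xx_{\rm old}$ is injective on this eigenspace, because $\xx_{\rm old}=0$ forces every $x_v=0$. Conversely, any eigenvector of $\PP_{g-1}$ with eigenvalue $\theta$ extends, through the formula for $x_v$, to an eigenvector of $\PP_g$ with eigenvalue $\mu=(1+\theta)/2$. The cases $\mu=0$ and $\mu=-1/2$ would require $\theta=-1$ or $\theta=-2$. Neither is possible, since the graphs are non-bipartite because they contain triangles. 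So every eigenvalue $\theta$ of $\PP_{g-1}$ produces $\mu=(1+\theta)/2$ with the same multiplicity. Because $\PP_g$ is similar to a symmetric matrix, geometric and algebraic multiplicities agree. All remaining $N_g-N_{g-1}=3^g$ eigenvalues of $\PP_g$ lie in $\{0,-1/2\}$.

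\emph{Step 2 (splitting the remaining eigenvalues).} I expect this to be the main obstacle. I would settle it with a trace argument rather than by constructing eigenvectors. $\mathcal{F}_g$ has no loops, so $\mathrm{tr}\,\PP_g=\mathrm{tr}\,\PP_{g-1}=0$. The lifted eigenvalues sum to $\sum_\theta(1+\theta)/2=N_{g-1}/2$. If $m$ denotes the multiplicity of $-1/2$, then $N_{g-1}/2-m/2=0$, so $m=N_{g-1}=(3^g+3)/2$. Consequently the multiplicity of $0$ is $3^g-m=(3^g-3)/2$.

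\emph{Step 3 (matching the formula).} Under $\theta\mapsto(1+\theta)/2$ we have $1-\frac{3}{2^{s+1}}\mapsto 1-\frac{3}{2^{s+2}}$ and $1-\frac{4}{2^{s+2}}\mapsto 1-\frac{4}{2^{s+3}}$, i.e.\ $s\mapsto s+1$, and the value $1$ is fixed. The multiplicities $(3^{(g-1)-s}\pm3)/2$ therefore become $(3^{g-(s+1)}\pm3)/2$. The new eigenvalue $-1/2$ is exactly the $s=0$ member of the first family, with multiplicity $(3^g+3)/2$. The new eigenvalue $0$ is the $s=0$ member of the second family, with multiplicity $(3^g-3)/2$. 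Thus the stated spectrum for $g$ follows from that for $g-1$.

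\emph{Base case.} Start from the triangle $\mathcal{F}_0$, whose spectrum is $\{1,-1/2,-1/2\}$. Apply the recursion twice to obtain the spectrum of $\mathcal{F}_2$, and check it against the formula with $g=2$. The restriction $g\ge 2$ only ensures that the second family's range $s\le g-2$ is nonempty and its multiplicities are consistent. As a final check, the multiplicities sum to $N_g$.
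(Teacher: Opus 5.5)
Your proof is correct. The paper gives no argument of its own: it imports the spectrum from~\cite{XiZhCo16}, which studies the normalized Laplacian of iterated triangulations of graphs. Your decimation induction is therefore a self-contained alternative. The steps hold up as follows.
\begin{itemize}
\item With $x_v=(x_a+x_b)/(2\mu)$, the old-node equation becomes $\frac{(2\mu-1)(2\mu+1)}{4\mu}x_i=\frac{2\mu+1}{4\mu}(\PP_{g-1}\xx_{\mathrm{old}})_i$.
\item Extension and restriction are mutually inverse on the relevant eigenspaces.
\item The identity $\mathrm{tr}\,\PP_g=\mathrm{tr}\,\PP_{g-1}=0$ legitimately fixes how the $3^g$ new eigenvalues split. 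This works because no lifted value $(1+\theta)/2$ can equal $0$ or $-1/2$.
\item The index shift $s\mapsto s+1$ closes the induction exactly.
\end{itemize}
One small correction: $\theta=-2$ is excluded simply because a stochastic matrix has spectral radius $1$. Non-bipartiteness is needed only to rule out $\theta=-1$.

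The trace trick buys brevity: you never construct the new eigenvectors. It also uses only connectivity, non-bipartiteness and the absence of loops. So it actually proves a general one-step triangulation rule for any connected non-bipartite simple graph with $N$ nodes and $M$ edges: the lifted eigenvalues $(1+\theta)/2$, plus $-1/2$ with multiplicity $N$ and $0$ with multiplicity $M-N$.

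If you want eigenvectors, the explicit route is also short.
\begin{itemize}
\item For $\mu=-1/2$, set $x_v=-(x_a+x_b)$. The old-node equations then hold identically, so this eigenspace is parametrized freely by $\xx_{\mathrm{old}}$.
\item For $\mu=0$, the new-node equations force $x_a+x_b=0$ on every old edge, hence $\xx_{\mathrm{old}}=\mathbf{0}$ by non-bipartiteness. Then $\xx_{\mathrm{new}}$ ranges over the kernel of the unsigned incidence matrix of $\mathcal{F}_{g-1}$, which has dimension $M_{g-1}-N_{g-1}=(3^g-3)/2$.
\end{itemize}

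Finally, the stated formula already holds at $g=0$ and $g=1$ if empty index ranges contribute nothing. So the induction could start at $g=0$, although your direct check at $g=2$ is equally valid.
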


\begin{figure}[htbp]
    \begin{center}
        \includegraphics[width=0.6\linewidth]{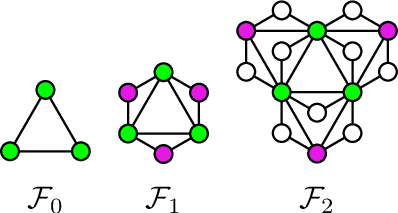}
        \caption{Construction of the first several iterations of the PSFW. }
        \label{psfw1}
    \end{center}
\end{figure}

Let  $\mathcal{F}'_g$ ($g \geq 0$) be the network associated with the PSFW $\mathcal{F}_g$, whose transition matrix is $\PP^2_g$. According to Lemma~\ref{lem:Fg}, the Kemeny constant $K(\mathcal{F}'_g)$ of graph $\mathcal{F}'_g$ can be expressed analytically as
\begin{small}
    \begin{align}
        K(\mathcal{F}'_g)
        =\sum_{s=0}^{g}\frac{2^{2s+1}\cdot\kh{3^{g-s-1}+1}}{2^{s+2}-3}
        +\sum_{s=0}^{g-2}\frac{2^{2s-1}\cdot\kh{3^{g-s}-3}}{2^{s+1}-1}.
        \label{Kemeny04}
    \end{align}
\end{small}

We also use algorithms \textsc{ApproxDelta} and \textsc{SampleDelta} to estimate the Kemeny constant $K(\mathcal{F}'_g)$ on four large networks: $\mathcal{F}'_{12}$, $\mathcal{F}'_{13}$, $\mathcal{F}'_{14}$ and $\mathcal{F}'_{15}$, with $\mathcal{F}'_{15}$ having 21,523,362 nodes.
The mean relative error for \textsc{ApproxDelta} and \textsc{SampleDelta} are reported in Table~\ref{tab:Kemeny}, which shows our algorithms \textsc{ApproxDelta} and \textsc{SampleDelta} always achieve good performance for graphs $\mathcal{F}'_{12}$ and $\mathcal{F}'_{13}$, but \textsc{SampleDelta} is a little effective and much efficient than \textsc{ApproxDelta}. For graphs $\mathcal{F}'_{15}$, \textsc{ApproxDelta} runs out of time and memory. However, \textsc{SampleDelta} returns results within 920 seconds, with a relative error smaller than $0.3\%$.


\subsection{Influence of Varying Error Parameter}

\begin{envblue}
    Through comprehensive evaluation of algorithmic performance, we observe substantial parameter dependence on the error parameter \(\epsilon\). To quantify this relationship, we conduct controlled experiments across diverse network scales by systematically adjusting \(\epsilon\) within the range \([0.25,0.4]\). Our investigation employs two critical measures: computational efficiency across varying network sizes and approximation accuracy of estimating Kemeny constant.
\end{envblue}

\begin{figure}[htbp]
    \centering
    \includegraphics[width=0.455\textwidth]{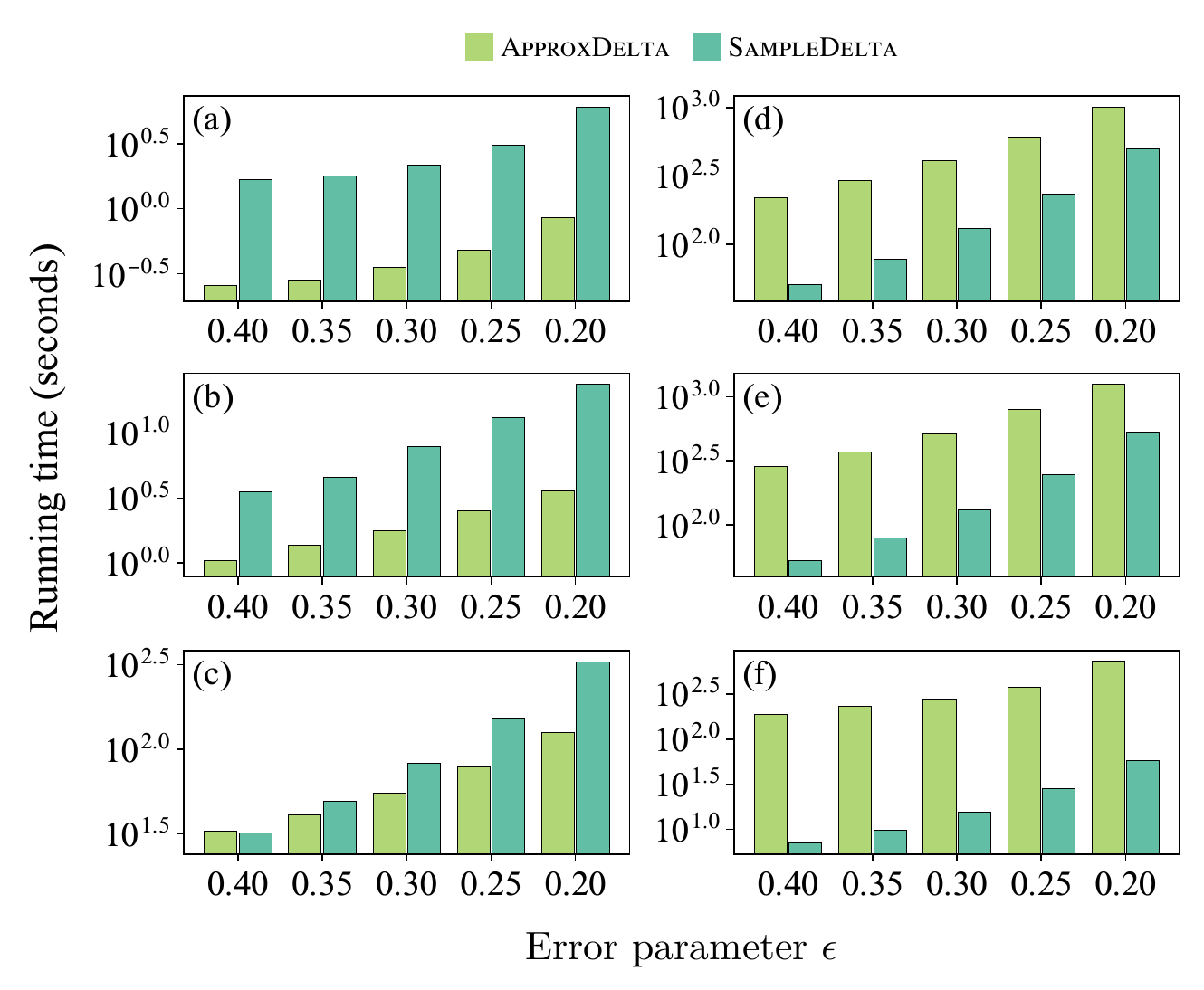}
    \caption{Running time of different algorithms with varying error parameter \(\epsilon\) on real-world graphs: Protein (a), web-EPA (b), Brightkite (c), soc-delicious (d), soc-hyves (e), and delaunay-n24 (f).}
    \label{fig:realistic_time_bar}
\end{figure}

\subsubsection{Effect on Efficiency}
\begin{envblue}
    Figure~\ref{fig:realistic_time_bar} illustrates the computational complexity characteristics of our methods under different \(\epsilon\) configurations. Both algorithms demonstrate runtime patterns that correspond to their theoretical complexity factors of \(\epsilon^{-2}\), though with distinct scaling behaviors across network sizes. 
    On smaller networks like Protein and web-EPA, \textsc{ApproxDelta} excels due to optimized matrix operations. However, \textsc{SampleDelta} gains substantial efficiency advantages on massive networks like soc-hyves and delaunay-n24 through its sublinear sampling approach. This performance inversion stems from the fundamental complexity characteristics. \textsc{SampleDelta}'s runtime improvement scales with \(\sqrt{n}\), making its sampling overhead only beneficial beyond certain network size thresholds. While both methods dramatically outperform conventional approaches, \textsc{SampleDelta} demonstrates superior scalability for networks with over one million nodes.
\end{envblue}

\begin{figure}[htbp]
    \centering
    \includegraphics[width=0.455\textwidth]{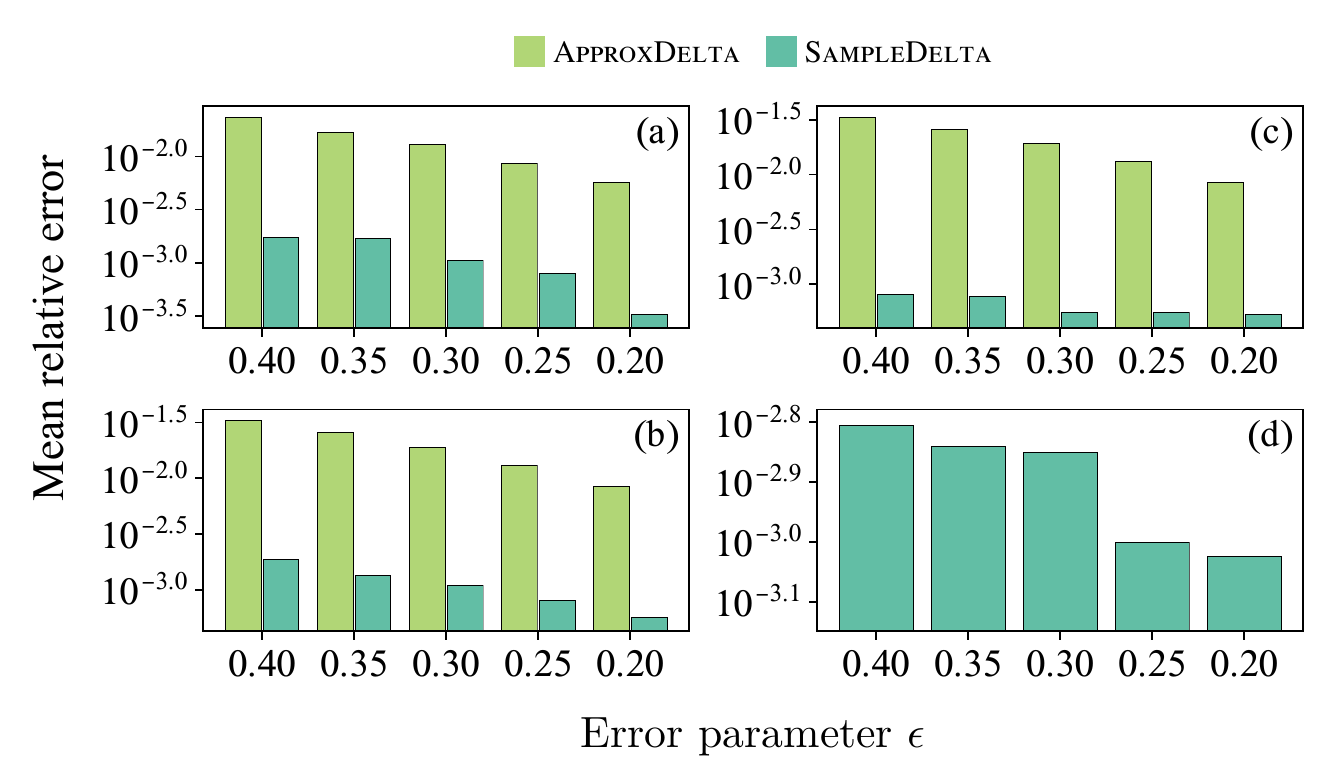}
    \caption{Mean relative error of different algorithms with varying error parameter \(\epsilon\) on the pseudofractal scale-free webs: \(\mathcal{F}'_{12}\), \(\mathcal{F}'_{13}\), \(\mathcal{F}'_{14}\), and \(\mathcal{F}'_{15}\).}
    \label{fig:model_error_bar}
\end{figure}

\subsubsection{Effect on Accuracy}
\begin{envblue}
    Figure~\ref{fig:model_error_bar} reveals the accuracy characteristics of our algorithms when estimating Kemeny constants for the PSFWs. As displayed in Figure~\ref{fig:model_error_bar}, \textsc{SampleDelta} maintains consistently higher accuracy across all \(\epsilon\) values tested. While larger values of \(\epsilon\) produce elevated relative errors, reducing the parameter to \(0.3\) or below yields estimation errors within acceptable bounds. This precision thresholding demonstrates effective error control mechanisms in both algorithms, with accuracy improvements plateauing when \(\epsilon < 0.3\). The observed error profiles confirm that parameter selection directly governs the fundamental trade-off between computational efficiency and estimation accuracy.
\end{envblue}

\section{Conclusions}
In this paper, we presented a study of the noisy DeGroot model on power-law graphs, with an aim to explore the impact of scale-free structure on the limiting opinions. The introduction of noise prevents the opinions of agents from reaching agreement, which fluctuate around their weighted average after long-time evaluation, leading to opinion disagreement or opinion diversity, which can be measured by a quantitative index, that is, expected squared deviation.

The main works of this paper are as follows. First, we studied the opinion disagreement for some representative real-world scale-free networks, BA networks, and random Apollonian networks, and found that opinion disagreement is significantly small, independent of the network size, which shows that the impact of noise on the opinion dynamics in power-law graphs is negligible, due to the scale-free structure. Then, we developed a sampling algorithm to approximate the opinion disagreement, which has a sublinear time complexity with respect to the number of nodes. Finally, we validated the efficiency and accuracy of our algorithm by extensive experiments on a large variety of realistic and model networks. 
While our current model relies on idealized noise assumptions, future work will focus on enhancing the framework's robustness against complex environmental dynamics and potential adversarial behaviors.



\bibliographystyle{IEEEtran}

\begin{thebibliography}{10}
\providecommand{\url}[1]{#1}
\csname url@samestyle\endcsname
\providecommand{\newblock}{\relax}
\providecommand{\bibinfo}[2]{#2}
\providecommand{\BIBentrySTDinterwordspacing}{\spaceskip=0pt\relax}
\providecommand{\BIBentryALTinterwordstretchfactor}{4}
\providecommand{\BIBentryALTinterwordspacing}{\spaceskip=\fontdimen2\font plus
\BIBentryALTinterwordstretchfactor\fontdimen3\font minus \fontdimen4\font\relax}
\providecommand{\BIBforeignlanguage}[2]{{%
\expandafter\ifx\csname l@#1\endcsname\relax
\typeout{** WARNING: IEEEtran.bst: No hyphenation pattern has been}%
\typeout{** loaded for the language `#1'. Using the pattern for}%
\typeout{** the default language instead.}%
\else
\language=\csname l@#1\endcsname
\fi
#2}}
\providecommand{\BIBdecl}{\relax}
\BIBdecl

\bibitem{Le20}
H.~Ledford, ``How facebook, twitter and other data troves are revolutionizing social science,'' \emph{Nature}, vol. 582, no. 7812, pp. 328--330, 2020.

\bibitem{SmCh08}
K.~P. Smith and N.~A. Christakis, ``Social networks and health,'' \emph{Annu. Rev. Sociol.}, vol.~34, no.~1, pp. 405--429, 2008.

\bibitem{AnYe19}
B.~D. Anderson and M.~Ye, ``Recent advances in the modelling and analysis of opinion dynamics on influence networks,'' \emph{Int. J. Autom. Comput.}, vol.~16, no.~2, pp. 129--149, 2019.

\bibitem{MaTeTs17}
A.~Matakos, E.~Terzi, and P.~Tsaparas, ``Measuring and moderating opinion polarization in social networks,'' \emph{Data Min. Knowl. Discov.}, vol.~31, no.~5, pp. 1480--1505, 2017.

\bibitem{MuMuTs18}
C.~Musco, C.~Musco, and C.~E. Tsourakakis, ``Minimizing polarization and disagreement in social networks,'' in \emph{Proc. 2018 World Wide Web Conference}.\hskip 1em plus 0.5em minus 0.4em\relax International World Wide Web Conferences Steering Committee, 2018, pp. 369--378.

\bibitem{ZhBaZh21}
L.~Zhu, Q.~Bao, and Z.~Zhang, ``Minimizing polarization and disagreement in social networks via link recommendation,'' \emph{Advances in Neural Information Processing Systems}, vol.~34, pp. 2072--2084, 2021.

\bibitem{HaMeRiUp21}
S.~Haddadan, C.~Menghini, M.~Riondato, and E.~Upfal, ``Republik: Reducing polarized bubble radius with link insertions,'' in \emph{Proc. 14th ACM International Conference on Web Search and Data Mining}, 2021, pp. 139--147.

\bibitem{GaKlTa20}
J.~Gaitonde, J.~Kleinberg, and E.~Tardos, ``Adversarial perturbations of opinion dynamics in networks,'' in \emph{Proc. 21st ACM Conference on Economics and Computation}, 2020, pp. 471--472.

\bibitem{WaZhZh25}
G.~Wang, R.~Zhang, and Z.~Zhang, ``Efficient algorithms for relevant quantities of {F}riedkin-{J}ohnsen opinion dynamics model,'' in \emph{Proc. 31st ACM SIGKDD International Conference on Knowledge Discovery \& Data Mining}.\hskip 1em plus 0.5em minus 0.4em\relax ACM, 2025, pp. 2915--2926.

\bibitem{SuSuZhZh25}
H.~Sun, Y.~Sun, X.~Zhou, and Z.~Zhang, ``Fast computation and optimization for opinion-based quantities of {F}riedkin-{J}ohnsen model,'' in \emph{The Thirty-ninth Annual Conference on Neural Information Processing Systems}, 2025.

\bibitem{QiQiZhLi26}
X.~Qian, B.-W. Qin, H.~Zhu, and W.~Lin, ``Opinion polarization and its connected disagreement: Modeling and modulation,'' \emph{Phys. Rev. E}, vol. 113, no.~1, p. L012301, 2026.

\bibitem{MaPa19}
E.~Mackin and S.~Patterson, ``Maximizing diversity of opinion in social networks,'' in \emph{Proc. 2019 Amer. Control Conf.}\hskip 1em plus 0.5em minus 0.4em\relax IEEE, 2019, pp. 2728--2734.

\bibitem{AxDaFo21}
R.~Axelrod, J.~J. Daymude, and S.~Forrest, ``Preventing extreme polarization of political attitudes,'' \emph{Proc. Natl. Acad. Sci. U.S.A.}, vol. 118, no.~50, p. e2102139118, 2021.

\bibitem{YiPa20}
Y.~Yi and S.~Patterson, ``Disagreement and polarization in two-party social networks,'' \emph{IFAC-PapersOnLine}, vol.~53, no.~2, pp. 2568--2575, 2020.

\bibitem{XuBaZh21}
W.~Xu, Q.~Bao, and Z.~Zhang, ``Fast evaluation for relevant quantities of opinion dynamics,'' in \emph{Proc. the Web Conference}, 2021, pp. 2037--2045.

\bibitem{SaKeKhLa23}
A.~Saha, X.~Ke, A.~Khan, and L.~V. Lakshmanan, ``Voting-based opinion maximization,'' in \emph{2023 IEEE 39th International Conference on Data Engineering}.\hskip 1em plus 0.5em minus 0.4em\relax IEEE, 2023, pp. 544--557.

\bibitem{No20}
H.~Noorazar, ``Recent advances in opinion propagation dynamics: {A} 2020 survey,'' \emph{Eur. Phys. J. Plus}, vol. 135, no.~6, p. 521, 2020.

\bibitem{BeWaVaHoShAl21}
C.~Bernardo, L.~Wang, F.~Vasca, Y.~Hong, G.~Shi, and C.~Altafini, ``Achieving consensus in multilateral international negotiations: The case study of the 2015 paris agreement on climate change,'' \emph{Sci. Adv.}, vol.~7, no.~51, p. eabg8068, 2021.

\bibitem{FrPrTePa16}
N.~E. Friedkin, A.~V. Proskurnikov, R.~Tempo, and S.~E. Parsegov, ``Network science on belief system dynamics under logic constraints,'' \emph{Science}, vol. 354, no. 6310, pp. 321--326, 2016.

\bibitem{FrBu17}
N.~E. Friedkin and F.~Bullo, ``How truth wins in opinion dynamics along issue sequences,'' \emph{Proc. Natl. Acad. Sci. U.S.A.}, vol. 114, no.~43, pp. 11\,380--11\,385, 2017.

\bibitem{De74}
M.~H. Degroot, ``Reaching a consensus,'' \emph{J. Am. Statist. Assoc.}, vol.~69, no. 345, pp. 118--121, 1974.

\bibitem{XiBoKi07}
L.~Xiao, S.~Boyd, and S.-J. Kim, ``Distributed average consensus with least-mean-square deviation,'' \emph{J. Parallel. Distrib. Comput.}, vol.~67, no.~1, pp. 33--46, 2007.

\bibitem{JaOl19}
A.~Jadbabaie and A.~Olshevsky, ``Scaling laws for consensus protocols subject to noise,'' \emph{IEEE Trans. Automat. Contr.}, vol.~64, no.~4, pp. 1389--1402, 2019.

\bibitem{XuZh23CIKM}
W.~Xu and Z.~Zhang, ``Minimizing polarization in noisy leader-follower opinion dynamics,'' in \emph{Proc. 32nd ACM International Conference on Information and Knowledge Management}, 2023, pp. 2856--2865.

\bibitem{BaAl99}
A.-L. Barab{\'a}si and R.~Albert, ``Emergence of scaling in random networks,'' \emph{Science}, vol. 286, no. 5439, pp. 509--512, 1999.

\bibitem{WeHeXiWaLiDuWe19}
Z.~Wei, X.~He, X.~Xiao, S.~Wang, Y.~Liu, X.~Du, and J.-R. Wen, ``{PRSim}: Sublinear time simrank computation on large power-law graphs,'' in \emph{Proc. 2019 International Conference on Management of Data}, 2019, pp. 1042--1059.

\bibitem{XuShZhKaZh20}
W.~Xu, Y.~Sheng, Z.~Zhang, H.~Kan, and Z.~Zhang, ``Power-law graphs have minimal scaling of {K}emeny constant for random walks,'' in \emph{Proc. the Web Conference 2020}, 2020, pp. 46--56.

\bibitem{MaJa21}
R.~Mayer and H.-A. Jacobsen, ``Hybrid edge partitioner: Partitioning large power-law graphs under memory constraints,'' in \emph{Proc. 2021 International Conference on Management of Data}, 2021, pp. 1289--1302.

\bibitem{XuZh23}
W.~Xu and Z.~Zhang, ``Optimal scale-free small-world graphs with minimum scaling of cover time,'' \emph{ACM Trans. Knowl. Discov. Data}, vol.~17, no.~7, p.~93, 2023.

\bibitem{Ne03}
M.~E.~J. Newman, ``The structure and function of complex networks,'' \emph{SIAM Rev.}, vol.~45, no.~2, pp. 167--256, 2003.

\bibitem{Be81}
R.~L. Berger, ``{A necessary and sufficient condition for reaching a consensus using DeGroot's method},'' \emph{J. Am. Stat. Assoc.}, vol.~76, no. 374, pp. 415--418, 1981.

\bibitem{FrJo90}
N.~E. Friedkin and E.~C. Johnsen, ``Social influence and opinions,'' \emph{J. Math. Sociol.}, vol.~15, no. 3-4, pp. 193--206, 1990.

\bibitem{ChMu20}
U.~Chitra and C.~Musco, ``Analyzing the impact of filter bubbles on social network polarization,'' in \emph{Proc. 13th International Conference on Web Search and Data Mining}, 2020, pp. 115--123.

\bibitem{TuNe22}
S.~Tu and S.~Neumann, ``A viral marketing-based model for opinion dynamics in online social networks,'' in \emph{Proc. ACM Web Conference 2022}, 2022, pp. 1570--1578.

\bibitem{Al13}
C.~Altafini, ``Consensus problems on networks with antagonistic interactions,'' \emph{IEEE Trans. Autom. Control}, vol.~58, no.~4, pp. 935--946, 2013.

\bibitem{ZhSuXuLiZh24}
X.~Zhou, H.~Sun, W.~Xu, W.~Li, and Z.~Zhang, ``{Friedkin-Johnsen model for opinion dynamics on signed graphs},'' \emph{IEEE Trans. Knowl. Data Eng.}, vol.~36, no.~12, pp. 8313--8327, 2024.

\bibitem{ZhWuAl20}
Q.~Zhou, Z.~Wu, A.~H. Altalhi, and F.~Herrera, ``A two-step communication opinion dynamics model with self-persistence and influence index for social networks based on the {DeGroot} model,'' \emph{Inf. Sci.}, vol. 519, pp. 363--381, 2020.

\bibitem{ZhXuZhCh24}
Z.~Zhang, W.~Xu, Z.~Zhang, and G.~Chen, ``Opinion dynamics in social networks incorporating higher-order interactions,'' \emph{Data Mining and Knowledge Discovery}, vol.~38, no.~6, pp. 4001--4023, 2024.

\bibitem{MeAsDaAmAn13}
E.~Yildiz, A.~Ozdaglar, D.~Acemoglu, A.~Saberi, and A.~Scaglione, ``Binary opinion dynamics with stubborn agents,'' \emph{ACM Trans. Econ. Comput.}, vol.~1, no.~4, pp. 1--30, 2013.

\bibitem{VaFaFr14}
V.~Luca, F.~Fabio, F.~Paolo, and O.~Asuman, ``Message passing optimization of harmonic influence centrality,'' \emph{IEEE Trans. Control Netw. Syst.}, vol.~1, no.~1, pp. 109--120, 2014.

\bibitem{MaAb19}
V.~S. Mai and E.~H. Abed, ``Optimizing leader influence in networks through selection of direct followers,'' \emph{IEEE Trans. Autom. Control}, vol.~64, no.~3, pp. 1280--1287, 2019.

\bibitem{ZhZh21}
X.~Zhou and Z.~Zhang, ``Maximizing influence of leaders in social networks,'' in \emph{Proc. 27th ACM SIGKDD International Conference on Knowledge Discovery \& Data Mining}.\hskip 1em plus 0.5em minus 0.4em\relax ACM, 2021, pp. 2400--2408.

\bibitem{ZhZhLiZh23}
X.~Zhou, L.~Zhu, W.~Li, and Z.~Zhang, ``A sublinear time algorithm for opinion optimization in directed social networks via edge recommendation,'' in \emph{Proc. 29th ACM SIGKDD International Conference on Knowledge Discovery \& Data Mining}.\hskip 1em plus 0.5em minus 0.4em\relax ACM, 2023, pp. 3593--3602.

\bibitem{ZhZh23}
X.~Zhou and Z.~Zhang, ``Opinion maximization in social networks via leader selection,'' in \emph{Proc. the ACM Web Conference}, 2023, pp. 133--142.

\bibitem{DaGoLe13}
P.~Dandekar, A.~Goel, and D.~T. Lee, ``Biased assimilation, homophily, and the dynamics of polarization,'' \emph{Proc. Natl. Acad. Sci. U.S.A.}, vol. 110, no.~15, pp. 5791--5796, 2013.

\bibitem{DoDiMa17}
Y.~Dong, Z.~Ding, L.~Mart{\'\i}nez, and F.~Herrera, ``Managing consensus based on leadership in opinion dynamics,'' \emph{Inf. Sci.}, vol. 397, pp. 187--205, 2017.

\bibitem{StLi20}
S.~Stern and G.~Livan, ``The impact of noise and topology on opinion dynamics in social networks,'' \emph{R. Soc. Open Sci.}, vol.~8, no. 201943, 2021.

\bibitem{LoGaZa13}
E.~Lovisari, F.~Garin, and S.~Zampieri, ``Resistance-based performance analysis of the consensus algorithm over geometric graphs,'' \emph{SIAM J. Control Optim.}, vol.~51, no.~5, pp. 3918--3945, 2013.

\bibitem{So18}
P.~Sobkowicz, ``Opinion dynamics model based on cognitive biases of complex agents,'' \emph{Journal of Artificial Societies and Social Simulation}, vol.~21, no.~4, p.~8, 2018.

\bibitem{ChWaTs22}
T.~Chen, X.~Wang, and C.~E. Tsourakakis, ``Polarizing opinion dynamics with confirmation bias,'' in \emph{Social Informatics}, 2022, pp. 144--158.

\bibitem{BeVaIe22}
C.~Bernardo, F.~Vasca, and R.~Iervolino, ``Heterogeneous opinion dynamics with confidence thresholds adaptation,'' \emph{IEEE Transactions on Control of Network Systems}, vol.~9, no.~3, pp. 1068--1079, 2022.

\bibitem{MeRaBrHaRo24}
V.~Mengers, M.~Raoufi, O.~Brock, H.~Hamann, and P.~Romanczuk, ``Leveraging uncertainty in collective opinion dynamics with heterogeneity,'' \emph{Sci. Rep.}, vol.~14, no.~1, p. 27314, 2024.

\bibitem{AbKlPaTs18}
R.~Abebe, J.~Kleinberg, D.~Parkes, and C.~E. Tsourakakis, ``Opinion dynamics with varying susceptibility to persuasion,'' in \emph{Proc. 24th ACM SIGKDD International Conference on Knowledge Discovery \& Data Mining}, 2018, p. 1089–1098.

\bibitem{LiYeAnBaNe18}
J.~Liu, M.~Ye, B.~D. Anderson, T.~Basar, and A.~Nedic, ``Discrete-time polar opinion dynamics with heterogeneous individuals,'' in \emph{2018 IEEE Conference on Decision and Control (CDC)}, 2018, pp. 1694--1699.

\bibitem{Ch18}
J.-H. Cho, ``Dynamics of uncertain and conflicting opinions in social networks,'' \emph{IEEE Trans. Comput. Soc. Syst.}, vol.~5, no.~2, pp. 518--531, 2018.

\bibitem{ZhLiKoDoYu19}
M.~Zhan, H.~Liang, G.~Kou, Y.~Dong, and S.~Yu, ``Impact of social network structures on uncertain opinion formation,'' \emph{IEEE Trans. Comput. Soc. Syst.}, vol.~6, no.~4, pp. 670--679, 2019.

\bibitem{FoKaKoSk18}
D.~Fotakis, V.~Kandiros, V.~Kontonis, and S.~Skoulakis, ``Opinion dynamics with limited information,'' in \emph{Web and Internet Economics}, 2018, pp. 282--296.

\bibitem{HeFaZhWa23}
Q.~He, H.~Fang, J.~Zhang, and X.~Wang, ``Dynamic opinion maximization in social networks,'' \emph{IEEE Trans. Knowl. Data Eng.}, vol.~35, no.~1, pp. 350--361, 2023.

\bibitem{ArPaSa23}
A.~Arya, P.~K. Pandey, and A.~Saxena, ``Balanced and unbalanced triangle count in signed networks,'' \emph{IEEE Trans. Knowl. Data Eng.}, vol.~35, no.~12, pp. 12\,491--12\,496, 2023.

\bibitem{HeFeChLiHuTa22}
C.~He, X.~Fei, Q.~Cheng, H.~Li, Z.~Hu, and Y.~Tang, ``A survey of community detection in complex networks using nonnegative matrix factorization,'' \emph{IEEE Trans. Comput. Soc. Syst.}, vol.~9, no.~2, pp. 440--457, 2022.

\bibitem{ArPa24}
A.~Arya and P.~K. Pandey, ``Sissrm: Sequentially induced signed subnetwork reconstruction model for generating realistic synthetic signed networks,'' \emph{IEEE Trans. Comput. Soc. Syst.}, vol.~11, no.~5, pp. 6476--6486, 2024.

\bibitem{HeSuWaWaHuYiWaMa21}
Q.~He, L.~Sun, X.~Wang, Z.~Wang, M.~Huang, B.~Yi, Y.~Wang, and L.~Ma, ``Positive opinion maximization in signed social networks,'' \emph{Inf. Sci.}, vol. 558, pp. 34--49, 2021.

\bibitem{ChLiDe18}
X.~Chen, J.~Lijffijt, and T.~De~Bie, ``Quantifying and minimizing risk of conflict in social networks,'' in \emph{Proc. 24th ACM SIGKDD International Conference on Knowledge Discovery \& Data Mining}, 2018, p. 1197–1205.

\bibitem{Ch97}
F.~R. Chung, \emph{Spectral Graph Theory}.\hskip 1em plus 0.5em minus 0.4em\relax American Mathematical Society, Providence, RI, 1997.

\bibitem{KeSn76}
J.~G. Kemeny and J.~L. Snell, \emph{Finite Markov Chains}.\hskip 1em plus 0.5em minus 0.4em\relax Springer, New York, 1976.

\bibitem{Lo93}
L.~Lov{\'a}sz, ``{Random walks on graphs: A survey},'' \emph{Combinatorics, Paul Erd\"{o}s is eighty}, vol.~2, no.~1, pp. 1--46, 1993.

\bibitem{CoBeTeVoKl07}
S.~Condamin, O.~B{\'e}nichou, V.~Tejedor, R.~Voituriez, and J.~Klafter, ``First-passage times in complex scale-invariant media,'' \emph{Nature}, vol. 450, no. 7166, pp. 77--80, 2007.

\bibitem{Be09}
A.~Beveridge, ``Centers for random walks on trees,'' \emph{SIAM J. Discrete Math.}, vol.~23, no.~1, pp. 300--318, 2009.

\bibitem{TeBeVo09}
V.~Tejedor, O.~B{\'e}nichou, and R.~Voituriez, ``Global mean first-passage times of random walks on complex networks,'' \emph{Phys. Rev. E}, vol.~80, no.~6, p. 065104, 2009.

\bibitem{MaMagi15}
C.~Mavroforakis, M.~Mathioudakis, and A.~Gionis, ``{Absorbing random-walk centrality: Theory and algorithms},'' in \emph{Proc. IEEE International Conference on Data Mining}.\hskip 1em plus 0.5em minus 0.4em\relax IEEE, 2015, pp. 901--906.

\bibitem{ZhXuZh20}
Z.~Zhang, W.~Xu, and Z.~Zhang, ``Nearly linear time algorithm for mean hitting times of random walks on a graph,'' in \emph{Proc. 13th International Conference on Web Search and Data Mining}, 2020, pp. 726--734.

\bibitem{XiXuZhZh25}
H.~Xia, W.~Xu, Z.~Zhang, and Z.~Zhang, ``Means of hitting times for random walks on graphs: {C}onnections, computation, and optimization,'' \emph{ACM Trans. Knowl. Discov. Data}, vol.~19, no.~2, pp. 1--35, 2025.

\bibitem{ChChLiPeTe15b}
D.~Cheng, Y.~Cheng, Y.~Liu, R.~Peng, and S.-H. Teng, ``{Efficient sampling for Gaussian graphical models via spectral sparsification},'' in \emph{Proc. 28th Conference on Learning Theory}, 2015, pp. 364--390.

\bibitem{BeGrTh74}
A.~Ben-Israel and T.~N.~E. Greville, \emph{Generalized inverses: theory and applications}.\hskip 1em plus 0.5em minus 0.4em\relax J. Wiley, 1974.

\bibitem{Ku13}
J.~Kunegis, ``Konect: The koblenz network collection,'' in \emph{Proc. 22nd International Conference on World Wide Web}, 2013, pp. 1343--1350.

\bibitem{RoAh15}
R.~Rossi and N.~Ahmed, ``The network data repository with interactive graph analytics and visualization,'' in \emph{Proc. 29th AAAI Conference on Artificial Intelligence}.\hskip 1em plus 0.5em minus 0.4em\relax AAAI, 2015, pp. 4292--4293.

\bibitem{ZhRoFr06}
Z.~Zhang, L.~Rong, and F.~Comellas, ``High-dimensional random {A}pollonian networks,'' \emph{Physica A}, vol. 364, pp. 610--618, 2006.

\bibitem{ZhZhShGu07}
Z.~Zhang, S.~Zhou, Z.~Shen, and J.~Guan, ``From regular to growing small-world networks,'' \emph{Physica A: Statistical Mechanics and its Applications}, vol. 385, no.~2, pp. 765--772, 2007.

\bibitem{WaSt98}
D.~J. Watts and S.~H. Strogatz, ``Collective dynamics of `small-world' networks,'' \emph{Nature}, vol. 393, no. 6684, pp. 440--442, 1998.

\bibitem{ShZh19}
Y.~Sheng and Z.~Zhang, ``Low-mean hitting time for random walks on heterogeneous networks,'' \emph{IEEE Trans. Inf. Theory}, vol.~65, no.~11, pp. 6898--6910, 2019.

\bibitem{Bo13}
E.~Bozzo, ``{The Moore--Penrose inverse of the normalized graph Laplacian},'' \emph{Linear Algebra Appl.}, vol. 439, no.~10, pp. 3038--3043, 2013.

\bibitem{GhBoSa08}
A.~Ghosh, S.~Boyd, and A.~Saberi, ``Minimizing effective resistance of a graph,'' \emph{SIAM Rev.}, vol.~50, no.~1, pp. 37--66, 2008.

\bibitem{JoLi84}
W.~B. Johnson and J.~Lindenstrauss, ``{Extensions of Lipschitz mappings into a Hilbert space},'' \emph{Contemp. Math.}, vol.~26, pp. 189--206, 1984.

\bibitem{KoMiPe11}
I.~Koutis, G.~L. Miller, and R.~Peng, ``A nearly-$m$ log $n$ time solver for {SDD} linear systems,'' in \emph{2011 IEEE 52nd Annual Symposium on Foundations of Computer Science}.\hskip 1em plus 0.5em minus 0.4em\relax IEEE, 2011, pp. 590--598.

\bibitem{CoKyMiPaPeRaSu14}
M.~B. Cohen, R.~Kyng, G.~L. Miller, J.~W. Pachocki, R.~Peng, A.~B. Rao, and S.~C. Xu, ``Solving {SDD} linear systems in nearly $m$ log 1/2 $n$ time,'' in \emph{Proc. 46th annual ACM symposium on Theory of computing}.\hskip 1em plus 0.5em minus 0.4em\relax ACM, 2014, pp. 343--352.

\bibitem{Ho63}
W.~Hoeffding, ``Probability inequalities for sums of bounded random variables,'' \emph{J. Amer. Statist. Assoc.}, vol.~58, no. 301, pp. 13--30, 1963.

\bibitem{KySa16}
R.~Kyng and S.~Sachdeva, ``Approximate {G}aussian elimination for {L}aplacians - fast, sparse, and simple,'' in \emph{Proc. {IEEE} 57th Annual Symposium on Foundations of Computer Science}.\hskip 1em plus 0.5em minus 0.4em\relax IEEE, 2016, pp. 573--582.

\bibitem{MoYuKi10}
A.~Mohaisen, A.~Yun, and Y.~Kim, ``Measuring the mixing time of social graphs,'' in \emph{Proc. 10th ACM SIGCOMM Conference on Internet Measurement}, 2010, pp. 383--389.

\bibitem{XiZhCo16}
P.~Xie, Z.~Zhang, and F.~Comellas, ``{On the spectrum of the normalized {L}aplacian of iterated triangulations of graphs},'' \emph{Appl. Math. Comput.}, vol. 273, pp. 1123--1129, 2016.

\end{thebibliography}

\begin{IEEEbiographynophoto} 
    {Wanyue Xu}
    (S'20)	received the B.Eng. degree in computer science and technology, Shangdong University, China, in 2019 and the Ph.D degree in School of Computer Science, Fudan University, China in 2024. She is currently an assistant professor with the School of Business and Management, Shanghai International Studies University,  Shanghai, China. 
    Her research interests include network science, graph data mining, social network analysis, and random walks. 
\end{IEEEbiographynophoto}

\begin{IEEEbiographynophoto}{Yubo Sun}
received the B.Eng. degree in Software
Engineering Institute, East China Normal University, Shanghai, China, in 2023. He is currently pursuing the Ph.D. degree in Research Institute of Intelligent
Complex Systems, Fudan University, Shanghai,
China. His research interests include complex
networks, graph theory, graph algorithms and graph learning.
\end{IEEEbiographynophoto}

\begin{IEEEbiographynophoto}{Mingzhe Zhu}
    received the B.Sc. degree in 2021 and  the Master degree in 2024, both in the
    School of Computer Science, Fudan University, Shanghai, China. His research interests include network science, graph
    data mining, and random walks.
\end{IEEEbiographynophoto}

\begin{IEEEbiographynophoto}
    {Zuobai Zhang} received the B.Sc. degree in the
    School of Computer Science, Fudan University,
    Shanghai, China, in 2021. He is currently pursuing the doctoral degree in Mila-Quebec AI Institute, Canada. His research interests include graph algorithms, social networks, and network science.\\
    Mr. Zhang has published several papers in international journals or conferences, including WSDM, WWW and IEEE Transactions on Cybernetics. He won a Silver Medal in National Olympiad in Informatics of China in 2016 and several Gold Medals in ICPC Asia Regional Contests.
\end{IEEEbiographynophoto}

\begin{IEEEbiographynophoto} 
    {Zhongzhi Zhang}
    (M'19)	 received the B.Sc. degree in applied mathematics from Anhui University, Hefei, China, in 1997 and the Ph.D. degree in management science and engineering from Dalian University of Technology, Dalian, China, in 2006. \\
    From 2006 to 2008, he was a Post-Doctoral Research Fellow with Fudan University, Shanghai, China, where he is currently a Full Professor with the College of Computer Science and Artificial Intelligence. 
    Since 2019, he has been selected as one of the most cited Chinese researchers 	(Elsevier) every year. 
    His current research interests include network science, graph data mining, social network analysis, computational social science, spectral graph theory, and random walks. \\
    Dr. Zhang was a recipient of the Excellent Doctoral Dissertation Award of Liaoning Province, China, in 2007, the Excellent Post-Doctor Award of Fudan University in 2008, the Shanghai Natural Science Award (third class) in 2013, the Wilkes Award for the best paper published in The Computer Journal in 2019, and the CCF Natural Science Award (second class) in 2022. 
\end{IEEEbiographynophoto}

\end{document}